\documentclass[prx,twocolumn,showpacs, superscriptaddress,preprintnumbers,amssymb]{revtex4-2}

\usepackage{bm}
\usepackage[dvipdfmx]{graphicx}
\usepackage{amsmath}
\usepackage{amssymb}
\usepackage{amsthm}
\usepackage{braket}
\usepackage{siunitx}
\usepackage{color}
\usepackage[utf8]{inputenc}
\usepackage{hyperref}
\usepackage{here}
\usepackage{ytableau} 
\usepackage{xcolor}

\usepackage{tikz}
\usepackage{tikz-cd}
\usetikzlibrary{arrows}
\usetikzlibrary{intersections}
\usetikzlibrary{shapes.geometric}
\usetikzlibrary{decorations.pathmorphing, patterns,shapes}
\usetikzlibrary{decorations.markings}

\usepackage[most]{tcolorbox}

\hypersetup{ 
setpagesize=false,
 bookmarksnumbered=true,%
 bookmarksopen=true,%
 colorlinks=true,%
 linkcolor=blue,
 citecolor=blue,
 urlcolor=blue
}

\theoremstyle{plain}
\newtheorem{thm}{Theorem}
\newtheorem{lm}{Lemma}

\newtheorem*{thm*}{Theorem}
\theoremstyle{definition}

\usepackage{physics}
\usepackage{mathtools}

\begin{document}
\title{Scaling-optimal purification of noisy qubit unitary channels}

\author{Ryotaro Niwa}
\affiliation{Department of Physics, Graduate School of Science, The University of Tokyo, Hongo 7-3-1, Bunkyo-ku, Tokyo 113-0033 Japan}
\email{ryotaro.niwa@phys.s.u-tokyo.ac.jp}

\author{Satoshi Yoshida}
\affiliation{Department of Physics, Graduate School of Science, The University of Tokyo, Hongo 7-3-1, Bunkyo-ku, Tokyo 113-0033 Japan}

\author{Koki Ono}
\affiliation{Graduate School of Arts and Sciences, The University of Tokyo, 3-8-1 Komaba, Meguro-ku, Tokyo 153-8902, Japan}

\author{Takeru Utsumi}
\affiliation{Graduate School of Arts and Sciences, The University of Tokyo, 3-8-1 Komaba, Meguro-ku, Tokyo 153-8902, Japan}

\author{Zhaoyi Li}
\affiliation{Department of Physics, Massachusetts Institute of Technology, Cambridge MA 02139, USA}

\author{Yuxiang Yang}
\affiliation{Quantum Information and Computation Initiative, Department of Computer Science, School of Computing and Data Science, The University of Hong Kong, Pokfulam Road, Hong Kong, China}

\author{Ryuji Takagi}
\affiliation{Graduate School of Arts and Sciences, The University of Tokyo, 3-8-1 Komaba, Meguro-ku, Tokyo 153-8902, Japan}

\author{Mio Murao}
\affiliation{Department of Physics, Graduate School of Science, The University of Tokyo, Hongo 7-3-1, Bunkyo-ku, Tokyo 113-0033 Japan}

\date{\today}

\begin{abstract}
We consider the problem of purifying noisy qubit unitary channels. Given the ability to apply an unknown qubit unitary channel followed by depolarizing noise, we aim to construct a superchannel that purifies the noisy unitary back to the original unknown unitary.  We first provide numerical evidence that sequential strategies can strictly outperform parallel strategies when the number of channel uses is finite, highlighting the fundamental distinction from state purification. We then provide a concrete $\mathrm{U}(2)$-covariant parallel protocol based on a novel entanglement-assisted quantum error-correcting code that suppresses the first-order noise strength as $O(1/n)$ with $n$ channel uses and show this scaling is asymptotically optimal in the low-noise regime, even when sequential strategies are allowed.
\end{abstract}

\maketitle

\section{Introduction}
The ability to correct errors is crucial for realizing reliable quantum computation at large scales. A promising route towards this goal is fault-tolerant quantum computation (FTQC)~\cite{PhysRevA.57.127, gottesman2014faulttolerantquantumcomputationconstant} with quantum error-correcting codes (QECCs)~\cite{Gottesman1997,PhysRevA.54.1098, PhysRevA.54.4741, Kitaev_2003, Dennis_2002}. In FTQC, one first encodes logical qubits into a QECC, applies unitary gates that may be faulty, and finally performs decoding operations to remove the detrimental effect of environmental noise.  The extensive degrees of freedom in choosing the encoding and decoding operations for QECCs have led to a variety of proposals including concatenated codes~\cite{PhysRevA.54.4741}, topological codes~\cite{Kitaev_2003, PhysRevLett.97.180501}, and entanglement-assisted codes~\cite{Brun_2006}. 

A different but similarly motivated task is \textit{state purification}, where one typically assumes access to multiple copies of a noise-corrupted state $\rho$ and aims to approximately recover a single copy of the original unknown pure state. This assumption creates a non-trivial difference from FTQC, where the input is usually specified as a single input state. While the optimal protocol for qubit state purification has long been known~\cite{PhysRevLett.82.4344}, an analytical solution in larger dimensions remained elusive for decades. Recently, several works have considered state purification in general dimensions~\cite{Childs_2025, grier2025streamingquantumstatepurification, brahmachari2025optimalqubitpurificationunitary, li2025optimalquantumpurityamplification}. In particular, Reference~\cite{li2025optimalquantumpurityamplification} provided an optimal deterministic protocol for state purification in arbitrary dimensions and determined the optimal fidelity expression to leading order. 

Analogously, one can consider \textit{channel purification}, where one has access to noisy unitary channels and aims to obtain the best possible approximation to the original unknown unitary gate. Such a distillation of quantum channels poses a qualitatively different challenge from the state case, since one has the freedom to apply quantum operations in various causal orders~\cite{Chiribella_2008}, giving rise to a much broader family of protocols. Note that channel purification is distinct from virtual protocols~\cite{PRXQuantum.6.020325}, where expectation values of observables can be virtually estimated but purified channels themselves are not deterministically obtained. Recently, a non-trivial channel purification protocol for qubit unitary channels under the depolarizing channel was reported~\cite{zhao2026distillingunitaryoperationsnogo}. However, their analysis was limited to $3$ channel uses, leaving the noise reduction rate in the large $n$ limit elusive. 
\begin{figure}
  \centering
  \includegraphics[width=\linewidth]{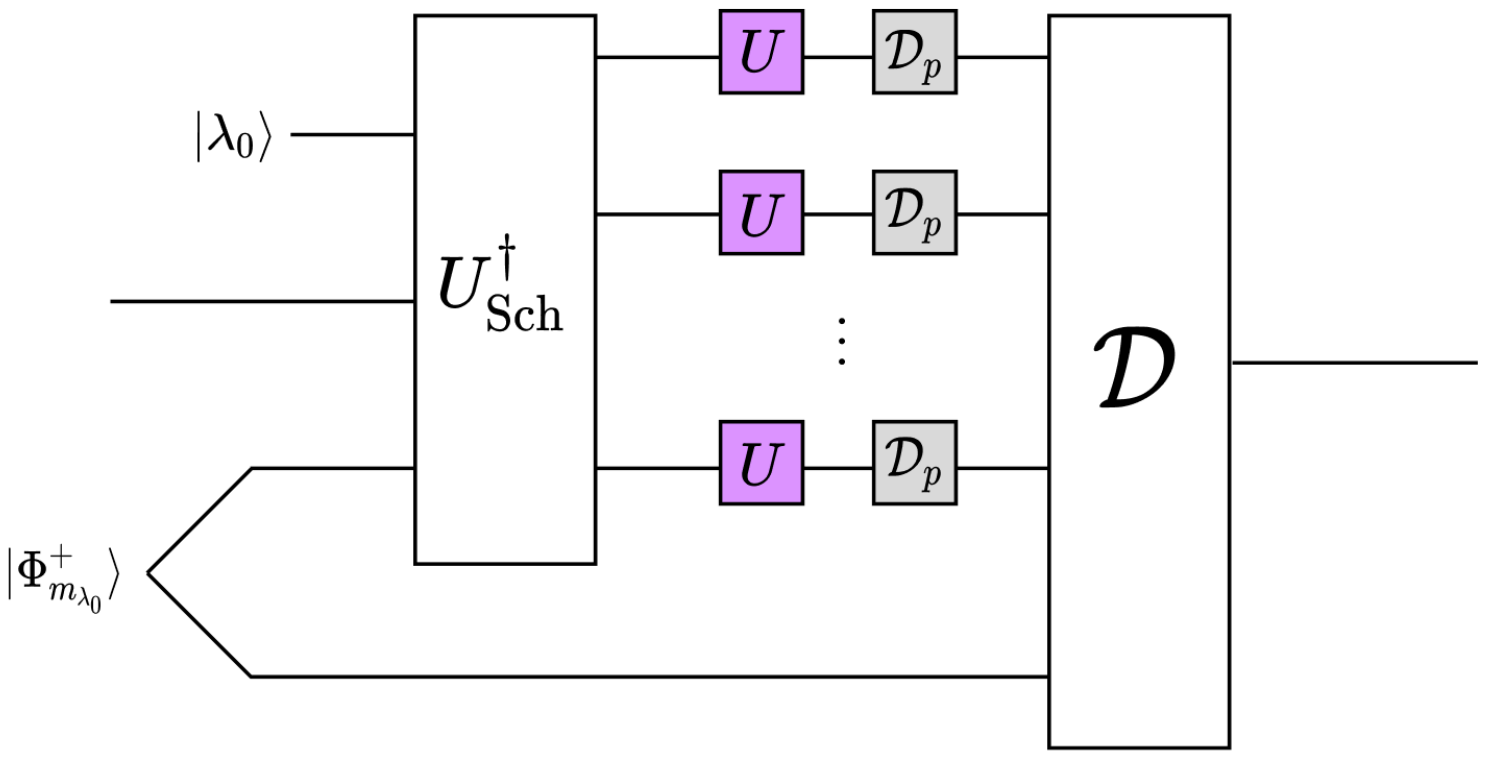}
  \caption{An entanglement-assisted $U(2)$-covariant quantum error-correcting code (QECC) that purifies $n$ noisy channels, which achieves the optimal scaling to first order in the noise strength. For odd integer $n=2k-1$, we define $\lambda_0:=(k, k-1)$, and $|\Phi_{m_{\lambda_0}}^+\rangle :=\frac{1}{\sqrt{m_{\lambda_0}}}\sum_{i=1}^{m_{\lambda_0}}|i\rangle \otimes |i\rangle$. The unitary $U_{\mathrm{Sch}}$ denotes the quantum Schur transform~\cite{bacon2005quantumschurtransformi, Kirby_2018, burchardt2025highdimensionalquantumschurtransforms}, while $\mathcal{D}$ is the optimal recovery map associated to this QECC and noise.}
  \label{fig:setup}
\end{figure}

In this paper, we focus on channel purification for qubit unitary channels under the canonical depolarizing noise. We first show that while the $3$-slot scheme presented in~\cite{zhao2026distillingunitaryoperationsnogo} can be reduced to a parallel scheme, sequential strategies can strictly outperform parallel strategies with $4$ channel uses, showcasing the fundamental distinction from state purification~\cite{li2025optimalquantumpurityamplification}. We then show that the first-order noise strength can be suppressed as $O(\frac{1}{n})$ with $n$ channel uses and prove that this scaling is optimal to the first order. Our upper bound on the fidelity holds for the most general sequential protocols, while our scaling-matching lower bound is provided by a concrete parallel protocol based on a novel $U(2)$-covariant entanglement-assisted QECC. Thus, sequential strategies have no scaling advantage over parallel strategies in the low-noise limit. 

\section{Preliminaries}
Here we review some concepts and tools including quantum superchannels, covariant quantum error-correction, and quantum metrology. The concept of quantum superchannels formalizes our problem setup, while covariant QECCs and quantum metrology are used in our lower bound and upper bound proof, respectively.

\subsection{Quantum superchannels}
A quantum superchannel~\cite{Chiribella_2008, 8678741} is a linear map that transforms quantum channels to quantum channels. Formally, an $n$-slot superchannel $\Xi$ is a linear map
\begin{align}
    \Xi: \bigotimes_{i=1}^{n} [\mathcal{L}(\mathcal{I}_i) \to \mathcal{L}(\mathcal{O}_{i})] \to [\mathcal{L}(\mathcal{P}) \to \mathcal{L}(\mathcal{F})] 
\end{align}
such that, for any set of quantum channels $\Phi_i: \mathcal{L}(\mathcal{I}_i \otimes \mathcal{A}_i) \to \mathcal{L}(\mathcal{O}_{i} \otimes \mathcal{A}_i)$ and auxiliary systems $\mathcal{A}_i$ for $i \in [n]$, the output $\Phi_{\textrm{out}}=(\Xi\otimes \mathbb{I}_{\mathcal{A}_1\ldots \mathcal{A}_n})(\Phi_1\otimes \Phi_2\otimes \cdots \otimes\Phi_n)$ is a quantum channel. Here, $\mathbb{I}$ denotes the identity. In this work, we consider the case where the input quantum channels $\Phi_1, \ldots, \Phi_n$ are identical. In this case, the most general quantum superchannel implementable in a quantum circuit is described as
\begin{align}\label{eq:Seqdecomp}
    \Phi_{\textrm{out}}= \Lambda_{n+1} \circ (\Phi_n \otimes \mathbb{I}_{\mathcal{A}_n}) \circ \Lambda_{n} \circ \cdots \circ (\Phi_{1} \otimes \mathbb{I}_{\mathcal{A}_1}) \circ \Lambda_1 
\end{align}
using auxiliary Hilbert spaces $\mathcal{A}_1, \cdots, \mathcal{A}_n$ and quantum channels $\Lambda_i: \mathcal{L}(\mathcal{O}_{i-1}\otimes \mathcal{A}_{i-1}) \to \mathcal{L}(\mathcal{I}_{i}\otimes \mathcal{A}_{i})$, $i\in [n+1]$, where $\mathcal{O}_0 := \mathcal{P}$ and $\mathcal{I}_{n+1} := \mathcal{F}$. Such a superchannel $\Xi$ is called a \textit{sequential superchannel}, a sequential strategy, or a quantum comb~\cite{chiribella2008quantum}, depending on the context. A restricted class of sequential superchannels is the \textit{parallel superchannel} (or parallel strategies), which admits the decomposition 
\begin{align}
    \Phi_{\textrm{out}} = \mathcal{D} \circ \qty(\bigotimes_{i=1}^n \Phi_i \otimes \mathbb{I}_\mathcal{A}) \circ \mathcal{E}, 
\end{align}
using an encoding channel $\mathcal{E}: \mathcal{L}(\mathcal{P})\to \mathcal{L}(\mathbf{I} \otimes \mathcal{A})$ with $\mathbf{I} = \bigotimes_i \mathcal{I}_{i}$ and the decoding channel $\mathcal{D}:  \mathcal{L}(\mathbf{O} \otimes \mathcal{A}) \to \mathcal{L}(\mathcal{F})$ with $\mathbf{O} = \bigotimes_i \mathcal{O}_{i}$. Analogous to quantum channels, one can define the \textit{Choi matrix} of quantum superchannels~\cite{Chiribella_2008} (See Appendix~\ref{ap:Background}). 


\subsection{Covariant QECC}
Consider a physical system that decomposes into local subsystems: $\mathcal{H}_S=\bigotimes_{i=1}^n \mathcal{H}_{S_{i}}$. A quantum error-correcting code (QECC) can be defined as the image of an encoding channel $\mathcal{E}_{L\to S}: \mathcal{L}(\mathcal{H}_{\textrm{L}}) \to  \mathcal{L}(\mathcal{H}_{\textrm{S}})$.
When the system undergoes a noisy operation $\mathcal{N}_S:  \mathcal{L}(\mathcal{H}_{\textrm{S}}) \to  \mathcal{L}(\mathcal{H}_{\textrm{S}})$, one applies a decoding map $\mathcal{D}_{S\to L}: \mathcal{L}(\mathcal{H}_{\textrm{S}}) \to  \mathcal{L}(\mathcal{H}_{\textrm{L}})$ so that the composed channel 
\begin{align}
     \mathcal{I}_L = \mathcal{D}_{S\to L} \circ \mathcal{N}_S \circ \mathcal{E}_{L\to S}
\end{align}
is as close as possible to the identity channel $\mathbb{I}_L$. When $\mathcal{I}_L = \mathbb{I}_L$, we say that the code is \textit{exact}~\cite{Knill_2000}.
When $\mathcal{I}_L \simeq \mathbb{I}_L$, we say that the code is \textit{approximate}~\cite{Barnum_2002, PhysRevLett.104.120501, PhysRevA.81.062342}.
To quantify the performance of an approximate QECC, we use
\begin{align}
    f_{\textrm{Choi}}(\Phi_1, \Phi_2):= f\qty(\widetilde{\mathcal{J}}_{\Phi_1}, \widetilde{\mathcal{J}}_{\Phi_2}),
\end{align}
where $f(\rho, \sigma):= [\Tr \sqrt{\rho^{1/2} \sigma \rho^{1/2}}]^2$ is the squared fidelity, $\widetilde{\mathcal{J}}_{\Phi_1}:= (\Phi_1\otimes \mathbb{I}_R)(|\Phi_{d}^+\rangle \langle \Phi_{d}^+|)$, $|\Phi_d^+\rangle := \frac{1}{\sqrt{d}}\sum_{i=1}^{d}|i\rangle \otimes |i\rangle$.  The optimal recovery fidelity in the average sense is~\footnote{Note that the references~\cite{Zhou_2021, Kubica_2021, Faist_2020} use the square root fidelity $f_{\mathrm{root}}:=\Tr \sqrt{\rho^{1/2} \sigma \rho^{1/2}}$.} 
\begin{align}
    f_{\mathrm{av}} := \max_{\mathcal{D}} f_{\textrm{Choi}}(\mathcal{D} \circ \mathcal{N} \circ \mathcal{E}, \mathbb{I}_L).
\end{align}
Now, consider any Lie group $G$ acting unitarily on the logical and physical systems with representations $U_L(g), U_S(g) \, (g\in G)$, respectively. We call a code $\mathcal{E}_{L\to S}$ $G$-covariant if 
\begin{align}
    \mathcal{E}_{L\to S} \circ \mathcal{U}_L(g) = \mathcal{U}_S(g) \circ \mathcal{E}_{L\to S}, 
\end{align}
for all $g\in G$, where $\mathcal{U}_L(g)(\cdot) = U_L(g) (\cdot) U_L(g)^\dagger$, $\mathcal{U}_S(g)(\cdot) = U_S(g) (\cdot) U_S(g)^\dagger$.
Previous works have revealed that such a covariance puts fundamental restrictions on the code's performance~\cite{Faist_2020}. 

\subsection{Quantum metrology}
The symmetric logarithmic derivative (SLD) quantum Fisher information (QFI)~\cite{cramerrao1969, Holevo} of a one-parameter family of quantum states $\rho_\theta$ is defined as  
\begin{align}
    F(\rho_\theta) := \Tr(\rho_\theta(L_\theta)^2), 
\end{align}
where $L_\theta$ is implicitly defined by 
\begin{align}
    \dot{\rho}_\theta = \frac{1}{2}(L_\theta \rho_\theta + \rho_\theta L_\theta). 
\end{align}
Using the SLD QFI for quantum states, one can define the SLD QFI of a quantum channel $\mathcal{N}_\theta$~\cite{FujiwaraImai2008FibreBundle} by 
\begin{align}
\max_\rho F((\mathcal{N}_\theta \otimes \mathbb{I}_\mathcal{A})(\rho)),  
\end{align}
where $\mathcal{A}$ is an auxiliary reference system.
It is known that the regularized SLD QFI for quantum channels has a single-letter expression~\cite{Zhou_2021_CE}: For a quantum channel $\mathcal{N}_\theta$ with Kraus representation $\mathcal{N}_\theta = \sum_{i=1}^r K_{i,\theta} (\cdot) K_{i, \theta}^\dagger$, the regularized SLD QFI is represented as
\begin{align}
    &F_{\mathrm{reg}}(\mathcal{N}_\theta) := \lim_{n \to \infty} \frac{F(\mathcal{N}_\theta^{\otimes n})}{n} = \begin{dcases}
        4 \min_{h: \beta_\theta=0} \lVert \alpha_\theta \rVert \quad (\ast)\\
        +\infty \quad (\textrm{otherwise})
    \end{dcases},
\end{align}
where $*$ denotes the condition given by
\begin{align}
    \ast : &i \sum_{i=1}^rK_{i, \theta}^\dagger \partial_\theta K_{i,\theta} \in \mathrm{span}\{K_{i,\theta}^\dagger K_{j, \theta}, \:\forall i, j\}\label{eq:HNKS}\\
    &\begin{dcases}
    \alpha_\theta = (\partial_\theta \mathbf{K}_\theta + ih\mathbf{K}_\theta)^\dagger(\partial_\theta \mathbf{K}_\theta + ih\mathbf{K}_\theta)\\
    \beta_\theta = \mathbf{K}_\theta^\dagger h \mathbf{K}_\theta -i \mathbf{K}_\theta^\dagger \partial_\theta \mathbf{K}_\theta. 
    \end{dcases}
\end{align}
Here $\mathbf{K}_\theta^T := (K_{1,\theta}^T, K_{2,\theta}^T, \cdots K_{r,\theta}^T) \in \mathbf{C}^{d \times rd}$ is a block matrix where $T$ is the transpose, $h$ is a Hermitian operator in $\mathbb{C}^{r\times r}$, and $\lVert \cdot \rVert$ denotes the operator norm. QFIs have been employed to quantify the performance of covariant QECCs~\cite{Kubica_2021, Zhou_2021}.

\section{Setup}
We first describe the optimal channel purification problem in dimension $d$ with $n$ channel uses. This problem was recently solved for the special case of $d=2, n=2,3$~\cite{zhao2026distillingunitaryoperationsnogo}. We are given access to a noisy unitary channel 
\begin{align}
    \mathcal{N}_{\mathcal{U},p} = \mathcal{D}_p\circ \mathcal{U}, 
\end{align}
where $\mathcal{D}_p(\rho) = (1-p)\rho+p\Tr(\rho)\frac{I
}{d}$ is the depolarizing channel with strength $p$ 
and $\mathcal{U}(\cdot) = U (\cdot) U^\dagger$ is an \textit{unknown} unitary channel. Our goal is to construct a superchannel $\Xi$ such that 
\begin{align}
\Xi(\mathcal{N}_{\mathcal{U},p}^{\otimes n}) \simeq \mathcal{U}
\end{align}
for all $U\in \mathrm{U}(d)$.
To quantify the performance of $\Xi$, we use the channel fidelity $f_{\mathrm{Choi}}$ as a figure of merit. It is convenient to define the performance operator~\cite{chiribella2016optimal} for analyzing the optimal performance. We define the performance operator by
\begin{align}
    \Omega_{n,p} = \frac{1}{d^2} \int \mathrm{d} U\, \mathcal{J}_{\mathcal{U}^*} \otimes (\mathcal{J}_{\mathcal{D}_p \circ \mathcal{U}})^{\otimes n}, 
\end{align}
where $(\cdot)^*$ represents the complex conjugate in the computational basis. The average channel fidelity between the output channel $\mathcal{Q}_{\mathcal{U}, p}^{(n)}=\Xi(\mathcal{N}_{\mathcal{U},p}^{\otimes n})$ and the unitary channel $\mathcal{U}$ is given by
\begin{align}
    \int \mathrm{d}U\, f_{\mathrm{Choi}}(\mathcal{Q}_{\mathcal{U}, p}^{(n)}, \mathcal{U}) = \Tr(\Omega_{n,p}^T \mathcal{J}_\Xi). 
\end{align}
Since the performance operator satisfies 
\begin{align}
    [V^*_P \otimes V^{\otimes n}_{\mathbf{I}} \otimes W_{F}^* \otimes W_{\mathbf{O}}^{\otimes n}, \Omega] = 0\quad (\forall V,W\in \mathrm{U}(d)), 
\end{align}
one can twirl $\mathcal{J}_\Xi$ and assume without loss of generality that the Choi matrix satisfies the same symmetry.
Using this twirled protocol, the average fidelity equals the optimal fidelity for every $U$. Thus, the optimal channel purification problem can be formulated as the semidefinite program (SDP)
\begin{align}\label{eq:SDP}
    &\textrm{max} \Tr(\Omega_{n,p}^T \mathcal{J}_\Xi)\nonumber\\
    &\textrm{subject to ``$\Xi$ is a superchannel"}. 
\end{align}
The superchannel constraint is given by Eq.~\eqref{eq:seqChoi}  for sequential superchannels and Eq.~\eqref{eq:parChoi} for parallel superchannels (See Appendix~\ref{ap:Background}). We denote the corresponding optimal values by $f_{n,d}^{\mathrm{Seq}}(p), f_{n,d}^{\mathrm{Par}}(p)$, respectively. Since every parallel superchannel is a special case of sequential superchannels, we have $f_{n,d}^{\mathrm{Par}}(p) \leq f_{n,d}^{\mathrm{Seq}}(p)$. The goal of this paper is to evaluate the large-$n$ scaling of these optimal fidelities in the $p\ll 1$ region, fixing $d=2$. 

\section{Result}
\subsection{Advantage of sequential strategies for finite $n$}
\begin{table*}[t]
\label{tab:n=4}
\centering
\caption{Optimal fidelity obtained numerically for $d=2, n=4, 5$. The upper and lower tables correspond to $n=4,5$, respectively and ``Seq" and ``Par" stand for sequential and parallel strategies. The $n=4$ data indicate that sequential strategies have a strict advantage over parallel strategies, while the $n=5$ data show an apparent coincidence of the optimal fidelities for the sequential and parallel cases. The largest gaps between the sequential and parallel strategies for $n=4,5$ are $5 \times 10^{-3}$ and $2\times 10^{-6}$, respectively. The solver tolerances are of the order $1 \times 10^{-8}$. We note that the performance operator was built from a floating-point group-theoretic data.}
\resizebox{\textwidth}{!}{
\begin{tabular}{c|ccccccccccc}
\hline
$n=4, p$ 
& $0.0$ & $0.1$ & $0.2$ & $0.3$ & $0.4$ & $0.5$ 
& $0.6$ & $0.7$ & $0.8$ & $0.9$ & $1.0$ \\
\hline
Par
& $1.000000$ & $0.966316$ & $0.918284$ & $0.857583$ & $0.786040$ 
& $0.705632$ & $0.618484$ & $0.526865$ & $0.433192$ & $0.340016$ & $0.250000$ \\
Seq
& $1.000000$ & $0.967854$ & $0.921203$ & $0.861569$ & $0.790664$ 
& $0.710400$ & $0.622895$ & $0.530474$ & $0.435671$ & $0.341216$ & $0.250000$ \\
\hline
\end{tabular}
}
\end{table*}
\begin{table*}[t]
\label{tab:n=5}
\centering
\resizebox{\textwidth}{!}{
\begin{tabular}{c|ccccccccccc}
\hline
$n=5, p$
& $0.0$ & $0.1$ & $0.2$ & $0.3$ & $0.4$ & $0.5$
& $0.6$ & $0.7$ & $0.8$ & $0.9$ & $1.0$ \\
\hline
Par
& $1.000000$ & $0.976162$ & $0.940968$ & $0.891548$ & $0.827030$
& $0.748241$ & $0.657405$ & $0.557839$ & $0.453652$ & $0.349444$ & $0.250000$ \\
Seq
& $1.000000$ & $0.976163$ & $0.940970$ & $0.891548$ & $0.827030$
& $0.748241$ & $0.657405$ & $0.557839$ & $0.453652$ & $0.349444$ & $0.250000$ \\
\hline
\end{tabular}
}
\end{table*}
We first solve the SDP in Eq.~\eqref{eq:SDP} numerically for $d=2$ and small $n$~\cite{Grinko_2024, Yoshida_2023}. We use symmetry to reduce the complexity of the problem. First, notice that for $d=2$, 
\begin{align}
    (-i\sigma^y)U^*(-i\sigma^y)^\dagger = (\mathrm{det}U)^{-1}\, U, \quad U\in U(2) 
\end{align}
where $-i\sigma^y=\mqty[0 & -1\\
    1 & 0]$. One can thus rewrite the performance operator without loss of generality as $\Omega \to \widetilde{\Omega}$,   
\begin{align}
     \widetilde{\Omega}_{n,p} := \frac{1}{d^2}\int dU |U\rangle \rangle \langle \langle U| \otimes \mathcal{J}_{\mathcal{D}_p \circ \mathcal{U}}^{\otimes n}.
\end{align}
The operator $\widetilde{\Omega}$ respects the symmetry
\begin{align}\label{eq:PO_unitary_symmetry}
    [V^{\otimes N}_{\mathbf{I}P}\otimes W^{\otimes N}_{\mathbf{O}F}, \widetilde{\Omega}_{n,p}] = 0 \quad \forall V,W \in U(2),  
\end{align}
where $N:=n+1$. Therefore, we may twirl the Choi matrix $\mathcal{J}_\Xi$ and assume without loss of generality that 
\begin{align}\label{eq:CO_unitary_symmetry}
    [V^{\otimes N}_{\mathbf{I}P}\otimes W^{\otimes N}_{\mathbf{O}F}, \mathcal{J}_\Xi] = 0, \quad \forall V,W \in U(2). 
\end{align}
By the Schur-Weyl duality (See Appendix~\ref{ap:Background}), the operator $\mathcal{J}_\Xi$ satisfying Eq.~\eqref{eq:CO_unitary_symmetry} is written as
\begin{align}
\begin{dcases}
    \mathcal{J}_\Xi = \sum_{\mu, \nu \vdash N} \mathbb{I}_{d_\mu} \otimes \mathbb{I}_{d_\nu}  \otimes C_{\mu \nu}\\
     C_{\mu \nu} = \sum_{ijkl}c_{ijkl}^{\mu \nu} |i\rangle \langle j| \otimes |k\rangle \langle l|. 
\end{dcases}
\end{align}
This reduction greatly reduces the size of the optimization~\cite{Grinko_2024,Yoshida_2023}, allowing us to numerically solve the SDP for $d=2, n=3,4,5$. The details of the simplified SDP can be found in Appendix~\ref{ap:SDP}. Building on previous works on unitary inversion~\cite{Quintino_2022, Yoshida_2023}, we computed the simplified SDP in MATLAB~\cite{matlab} using the interpreter CVX~\cite{cvx, gb08} with the solvers SDPT3~\cite{sdpt3, toh1999sdpt3, tutuncu2003solving} and SeDuMi~\cite{sedumi}. Group-theoretic numerical values for computing the simplified SDP were adapted from Reference~\cite{Yoshida_2023}, which originally used SageMath~\cite{sagemath} for computation. 

We first find that the optimal fidelity for $n=3$ reported in Reference~\cite{zhao2026distillingunitaryoperationsnogo} can be achieved by a parallel protocol. Therefore, sequential strategies have no advantage over parallel strategies for $d=2, n=3$. In contrast, the numerical values in TABLE I show for $n=4$ that sequential strategies can strictly outperform parallel strategies. For $n=5$, the sequential and parallel optima seem to agree within numerical precision, suggesting a difference between the even and odd cases. 

\subsection{Lower bound}
We now prove a lower bound on the channel purification fidelity by constructing a concrete $U(2)$-covariant entanglement-assisted QECC. 
\begin{thm}\label{thm1}
The low-noise expansion of the optimal channel purification fidelities $f_{n,2}^{\mathrm{Seq}}(p), f_{n,2}^{\mathrm{Par}}(p)$ at fixed odd integer $n$ satisfies 
\begin{align}
f_{n,2}^{\mathrm{Seq}}(p) 
\geq f_{n,2}^{\mathrm{Par}}(p) \geq 1-C_n^{(1)}p + O(n^2p^2), 
\end{align}
where 
\begin{align}
    C_n^{(1)} := \frac{3}{4}n-\frac{n-1}{3}\qty(1+\frac{1}{2}\sqrt{\frac{n+3}{n}})^2 = \frac{9}{8n} + O\qty(\frac{1}{n^2}). 
\end{align}
This gives the $p\to 0$ expansion for all integers $n$, 
\begin{align}
f_{n,2}^{\mathrm{Seq}}(p) 
\geq f_{n,2}^{\mathrm{Par}}(p) \geq 1- \frac{9p}{8n} + O\qty(\frac{p}{n^2}) + O(n^2p^2). 
\end{align}
\end{thm}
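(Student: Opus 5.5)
Since every parallel superchannel is sequential, $f^{\mathrm{Seq}}_{n,2}\ge f^{\mathrm{Par}}_{n,2}$, so the task is to exhibit one explicit parallel protocol and expand its fidelity to first order in $p$. We take the protocol of Fig.~\ref{fig:setup}. For odd $n=2k-1$, Schur--Weyl duality gives
\begin{align}
(\mathbb{C}^2)^{\otimes n}\cong\bigoplus_\lambda \mathcal{U}_\lambda\otimes\mathcal{S}_\lambda .
\end{align}
Pick the irrep $\lambda_0=(k,k-1)$. It is the spin-$1/2$ irrep of $\mathrm{U}(2)$, tensored with $\det^{k-1}$, and it has multiplicity $m_{\lambda_0}=\dim\mathcal{S}_{\lambda_0}=\binom{n}{k-1}-\binom{n}{k-2}$.

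The encoding $\mathcal{E}$ works as follows:
\begin{itemize}
\item embed the input qubit into $\mathcal{U}_{\lambda_0}$;
\item put the multiplicity register in a maximally entangled state $|\Phi^+_{m_{\lambda_0}}\rangle$ with an ancilla $\mathcal{A}$;
\item apply $U_{\mathrm{Sch}}^\dagger$.
\end{itemize}
The noiseless channel $U^{\otimes n}$ acts as $(\det U)^{k-1}\,U\otimes I$ on this code. The global phase is irrelevant, so the decoder $\mathrm{Tr}_{\mathcal{A}}\circ U_{\mathrm{Sch}}$ followed by projection onto $\mathcal{U}_{\lambda_0}$ gives $\mathcal{U}$ exactly at $p=0$. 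The code is $\mathrm{U}(2)$-covariant by construction. One then twirls the decoder as well, so that the fidelity is independent of $U$ and we may set $U=I$. The problem is now a standard approximate-QECC problem: code $\mathcal{E}$, noise $\mathcal{D}_p^{\otimes n}$, and fidelity $f_{\mathrm{av}}$.

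Next, expand the noise to first order:
\begin{align}
\mathcal{D}_p^{\otimes n}=(1-\tfrac{3p}{4})^n\,\mathrm{id}+\frac{p}{4}\sum_{j=1}^n\sum_{a=x,y,z}\sigma^a_j(\cdot)\sigma^a_j+O(n^2p^2).
\end{align}
So to first order the error set is $\{I\}\cup\{\sigma^a_j\}$. Doing nothing at all would give infidelity $\tfrac34 np$. The $\tfrac{n-1}{3}(\cdots)^2$ term in $C_n^{(1)}$ is the amount of the single-qubit Pauli errors that the decoder can coherently map back onto the code.

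The total operators $S^a=\sum_j\sigma^a_j/2$ act within $\mathcal{U}_{\lambda_0}$ as logical Paulis. They are therefore uncorrectable, and their weight is exactly what survives. Concretely, decompose each $\sigma^a_j$ into $\mathrm{U}(2)$ tensor operators of spin $1$. On the code, the part of $\sigma^a_j$ that stays in $\lambda_0$ acts as (logical Pauli)$\,\otimes\,$(an operator on $\mathcal{S}_{\lambda_0}$). The rest maps to the spin-$3/2$ irrep $\lambda_1=(k+1,k-2)$, or to other multiplicity components. These pieces are orthogonal and detectable, so they can be undone.

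The key computation is the following. Using the entanglement with $\mathcal{A}$, the Knill--Laflamme-type overlap matrix $\langle\Phi^+|E_a^\dagger E_b|\Phi^+\rangle$ becomes a trace over $\mathcal{S}_{\lambda_0}$. The first-order optimal recovery fidelity then equals $1-p\,\big(\tfrac34 n-\tfrac14\sum\|\text{correctable overlap}\|^2\big)$. Here the correctable overlap is the squared norm of the component of $\sum_{j}\sigma_j^a\,|\mathrm{code}\rangle$ that is distinguishable from the code and can be rotated back. This quantity needs reduced Wigner-type matrix elements of a single-site spin in the $\lambda_0$ and $\lambda_1$ irreps.

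The result should be computed via the Petz/transpose recovery, or directly via the SDP for $f_{\mathrm{av}}$ at first order, which is a largest-eigenvalue problem in a $2\times2$ block. That is where the square root $\sqrt{(n+3)/n}$ arises: it is an eigenvalue of a $2\times 2$ matrix mixing the $\lambda_0$ and $\lambda_1$ contributions.

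I expect this optimization over recovery maps at first order, together with the Clebsch--Gordan and reduced-matrix-element bookkeeping, to be the main obstacle. I would first verify the formula for $n=3$ against the numerics before doing the general case.

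Finally, expand $C_n^{(1)}$ in $1/n$:
\begin{align}
\sqrt{\tfrac{n+3}{n}}=1+\tfrac{3}{2n}-\tfrac{9}{8n^2}+\cdots,
\end{align}
so $\big(1+\tfrac12\sqrt{\cdot}\big)^2=\tfrac94+\tfrac{9}{4n}-\cdots$. Hence $C_n^{(1)}=\tfrac34 n-\tfrac{n-1}{3}\big(\tfrac94+\tfrac9{4n}-\tfrac{9}{16n^2}+\cdots\big)=\tfrac{9}{8n}+O(n^{-2})$.

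For even $n$, use the $(n-1)$-use protocol and discard one channel. This changes the bound only by $O(p/n^2)$, which gives the statement for all $n$.
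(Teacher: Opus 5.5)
\paragraph{Verdict: genuine gap.} Your setup matches the paper's: the same entanglement-assisted code on $\lambda_0=(k,k-1)$, covariance, parallel strategies as a special case of sequential ones, the first-order Pauli expansion, and discarding a slot for even $n$. But the content of the theorem is that this code's first-order recovery infidelity is $C_n^{(1)}p$. That is exactly the step you leave as ``the main obstacle,'' and the heuristic you put in its place is wrong.

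\paragraph{Your heuristic gives the wrong constant.} Suppose, as you say, only the collective part were uncorrectable, and the remaining pieces, being orthogonal and detectable, ``can be undone.'' Rotate the Kraus basis to $\sigma_{a,w}=\sum_r w_r\sigma_{a,r}$ and take $u=n^{-1/2}(1,\dots,1)$. Then $\sigma_{a,u}V=n^{-1/2}V\tau_a$, a pure logical error of weight $1/n$. Your picture would therefore give infidelity $3\epsilon_p/n\approx 3p/(4n)$. That is the metrological upper-bound value, not $9p/(8n)$.

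The flaw is that detectable does not mean correctable here. For each of the $n-1$ directions $v\perp u$, the Knill--Laflamme matrix is $V^\dagger\sigma_{a,v}\sigma_{b,v}V=(1+\frac1n)\delta_{ab}+\frac{i}{n}\epsilon_{abc}\tau_c$, which is not proportional to $\mathbb{I}_L$. View it as a $6\times 6$ matrix $G_v$ on (Pauli label)$\otimes$(logical). It has eigenvalue $1+\frac3n$ on the total-spin-$\frac12$ part, which is carried by the $\lambda_0$ components. It has eigenvalue $1$ on the spin-$\frac32$ part, which is carried by the $\lambda_1$ components.

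In the paper's evaluation, each sector therefore contributes $\frac{\epsilon_p}{12}(\mathrm{Tr}\sqrt{G_v})^2=\frac{\epsilon_p}{3}s_n^2$ to the fidelity, with $s_n=2+\sqrt{(n+3)/n}$. Its full weight would be $\frac{\epsilon_p}{2}\mathrm{Tr}\,G_v=3\epsilon_p(1+\frac1n)$. The deficit is only $O(\epsilon_p/n^2)$ per sector, but summed over $n-1$ sectors it produces the extra $3p/(8n)$ in $C_n^{(1)}$.

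Two consequences for your sketch. First, $\sqrt{(n+3)/n}$ is the square root of an eigenvalue, entering through $\mathrm{Tr}\sqrt{\cdot}$ in the fidelity. It is not an eigenvalue of a $2\times 2$ block mixing $\lambda_0$ and $\lambda_1$; those components lie in orthogonal subspaces and are never mixed. Second, a ``sum of squared correctable norms'' formula like the one you write does not capture this coherence penalty.

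\paragraph{The missing idea.} What you need is an exact evaluation of all the overlaps, and the maximal entanglement with $E$ makes this cheap. You already note that the overlaps become traces over $\mathcal{S}_{\lambda_0}$. The point is that $V^\dagger OV=\frac{1}{m_{\lambda_0}}\mathrm{Tr}_{\mathcal{S}_{\lambda_0}}(\Pi_{\lambda_0}O\Pi_{\lambda_0})$ is then permutation invariant. Combine this with covariance, i.e.\ $V^\dagger S_aV=\tau_a$ and $V^\dagger S_aS_bV=\tau_a\tau_b$ for $S_a=\sum_r\sigma_{a,r}$. Together these force
$V^\dagger\sigma_{a,r}V=\tau_a/n$,
$V^\dagger\sigma_{a,r}\sigma_{b,r}V=\delta_{ab}+\frac{i}{n}\epsilon_{abc}\tau_c$,
and $V^\dagger\sigma_{a,r}\sigma_{b,s}V=-\frac1n\delta_{ab}$ for $r\neq s$. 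No reduced Wigner coefficients are needed.

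The paper then uses the B\'eny--Oreshkov form $f_{\mathrm{av}}=\max_\rho f_{\mathrm{Choi}}(\widehat{\mathcal{N}\circ\mathcal{E}},\mathcal{T}_\rho)$. The complementary state is block diagonal in $\{|0\rangle,u\}$ and the $v$-sectors. Since only a lower bound is needed, one plugs in $\rho_E\propto q|0\rangle\langle 0|+\frac{\epsilon_p s_n^2}{9}\mathbb{I}_3\otimes\Pi_{u^\perp}$ and obtains $f_{\mathrm{av}}\geq q+\frac{(n-1)\epsilon_p s_n^2}{3}$. No optimization over recovery maps (Petz or SDP) is required.

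\paragraph{Slip in the asymptotic expansion.} The correct expansion is $(1+\frac12\sqrt{1+3/n})^2=\frac94+\frac{9}{4n}-\frac{9}{8n^2}+O(n^{-3})$, not $-\frac{9}{16n^2}$. This matters: because the $\frac34 n$ terms cancel, the $n^{-2}$ coefficient in the bracket feeds into the $1/n$ term of $C_n^{(1)}$. Your coefficient would give $\frac{15}{16n}$ rather than $\frac{9}{8n}$.
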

\begin{proof}
Let $n$ be an odd integer, $n=2k-1$ and set $\lambda_0 =(k,k-1)$. We encode the logical state $\ket{\psi}$ into the Weyl module of the irrep, while maximally entangling the Specht module with the noiseless register $E$ (see also Fig.~\ref{fig:setup}). 
\begin{align}
    V_n |\psi\rangle = \frac{1}{\sqrt{m_{\lambda_0}}} \sum_{i=1}^{m_{\lambda_0}} U_{\mathrm{Sch}}^\dagger (|\lambda_0\rangle \otimes |\psi\rangle_{\mathcal{U}_{\lambda_0}} \otimes |i\rangle_{\mathcal{S}_{\lambda_0}}) \otimes |i\rangle_E. 
\end{align}
This encoding is $U(2)$-covariant, that is,  
\begin{align}
    \mathcal{V}_n \circ \mathcal{U} = (\mathcal{U}^{\otimes n}\otimes \mathbb{I}_E) \circ \mathcal{V}_n, 
\end{align}
where $\mathcal{V}_n(\cdot):= V_n(\cdot) V_n^\dagger$. Our proof strategy is to prove a lower bound on the optimal decoding fidelity of this covariant QECC under the depolarizing noise $\mathcal{N}:=\mathcal{D}_p^{\otimes n} \otimes \mathbb{I}_E$. Note that we are considering an entanglement-assisted QECC, that is, the decoding channel is allowed to act on the noiseless system $E$. Now, to the first order of $p$, the $n$-qubit depolarizing channel is 
\begin{align}\label{eq:depolarizingexpansion}
    \mathcal{D}_p^{\otimes n}(\cdot) = q(\cdot) + \epsilon_p \sum_{r=1}^n \sum_{a=x,y,z} \sigma_{a,r}(\cdot) \sigma_{a,r}+ O(n^2p^2), 
\end{align}
where $q:=\qty(1-\frac{3}{4}p)^n$, $\epsilon_p := \frac{p}{4}\qty(1-\frac{3}{4}p)^{n-1}$ and the remainder term is evaluated in the diamond norm. The omitted terms contribute to the fidelity with weight $O(n^2p^2)$, and thus are irrelevant to the first-order coefficient $C_n^{(1)}$. Let us define the Kraus operators of $\mathcal{D}_p^{\otimes n}$ that contribute to the no-error and single-error terms as 
\begin{align}
    E_{0} := \sqrt{q} \: \mathbb{I}, \quad E_{a,r}:= \sqrt{\epsilon_p} \sigma_{a,r}, 
\end{align}
and write $\mathcal{N}_{\mathrm{eff}}$ to denote the no-error and single-error part of Eq.~\eqref{eq:depolarizingexpansion}. Utilizing the Bény-Oreshkov condition~\cite{PhysRevLett.104.120501, Faist_2020}, which states that the fidelity with which one can reverse the action of the encoding and the noise channel is exactly the fidelity of the total complementary channel to a constant channel, the average decoding fidelity $f_{\mathrm{av}}$ may be expressed as 
\begin{align}
f_{\mathrm{av}} &= \max_\rho f_{\mathrm{Choi}}(\widehat{\mathcal{N} \circ \mathcal{E}}, \mathcal{T}_\rho)\nonumber\\
&=\max_\rho f_{\mathrm{Choi}}(\widehat{\mathcal{N}_{\mathrm{eff}} \circ \mathcal{E}}, \mathcal{T}_\rho) + O(n^2p^2), 
\end{align}
where $\widehat{\mathcal{N}}$ denotes the complementary  channel of $\mathcal{N}$ defined by
$\widehat{\mathcal{N}}(\cdot):=\Tr_{\mathcal{O}}[V(\cdot)V^\dagger]$, where $V:\mathcal{I} \to \mathcal{O} \otimes \mathcal{E}$ is the Stinespring dilation of the channel $\mathcal{N}$, $\mathcal{N}(\cdot):=\Tr_{\mathcal{E}}[V(\cdot)V^\dagger]$, and $\mathcal{T}_\rho(\cdot):= \Tr(\cdot)\rho$. Here, $\mathcal{N}_{\mathrm{eff}}$ is a trace-non-increasing completely positive map corresponding to the no-error and single-error sectors. The omitted multi-error sector has total weight $O(n^2p^2)$, and hence changes the optimized squared fidelity only by $O(n^2p^2)$. In our case, $\widehat{\mathcal{N}_\mathrm{eff} \circ \mathcal{V}_n}(\rho)$ has the form 
\begin{align}
\widehat{\mathcal{N}_\mathrm{eff} \circ \mathcal{V}_n}(\rho) = \sum_{ij} \Tr(\rho V_n^\dagger E_j^\dagger E_i V_n) |i\rangle \langle j|, 
\end{align}
\begin{align}\label{eq:compchannelelem}
    V_n^\dagger E_j^\dagger E_i V_n = A_{ij} \mathbb{I}_L + \sum_{c=x,y,z} (\delta A)_{ij}^c \tau_c, 
\end{align}
with matrix elements $A_{ij}, (\delta A)_{ij}$ given in Appendix~\ref{ap:lowerbound}. In the above equation, $\tau_a$ denotes the logical Pauli operator, $\epsilon_{ijk}$ denotes the Levi-Civita epsilon defined by 
\begin{align}
    \epsilon_{ijk}:=
    \begin{dcases}
        +1 \quad (\textrm{even permutation)}\\
        -1 \quad (\textrm{odd permutation)}\\
        0. 
    \end{dcases}
\end{align}
Now, let $A:= \sum_{ij} A_{ij} |i\rangle \langle j|, \, 
(\delta A)^c := \sum_{ij} (\delta A)^{c}_{ij} |i\rangle \langle j|$,  
\begin{align}
    \rho_S := A \otimes \frac{\mathbb{I}_L}{2} + \frac{1}{2} \sum_{c} (\delta A)^c \otimes \tau_c^T.
\end{align}
The average recovery fidelity $f_{\mathrm{av}}$ satisfies 
\begin{align}\label{eq:fidelitylowerbound}
    f_{\mathrm{av}} &=  \max_\rho f_{\mathrm{Choi}}(\widehat{\mathcal{N} \circ \mathcal{E}}, \mathcal{T}_\rho)\nonumber\\
    &\geq f_(\rho_S, \rho_E \otimes \frac{\mathbb{I}_L}{2})+O(n^2p^2)\nonumber\\
    &= 1-C_n^{(1)} p +O(n^2p^2)\nonumber \\
    &=  1- \frac{9p}{8n} + O\qty(\frac{p}{n^2}) + O(n^2p^2), 
\end{align}
for some optimized state $\rho_E$ and 
\begin{align}
    C_n^{(1)} := \frac{3}{4}n - \frac{n-1}{3}\qty(1 + \frac{1}{2}\sqrt{\frac{n+3}{n}})^2. 
\end{align}
See Appendix~\ref{ap:lowerbound} for details of the proof. Since our construction is a parallel superchannel, we have 
\begin{align}
f_{n,2}^{\mathrm{Seq}}(p) 
\geq f_{n,2}^{\mathrm{Par}}(p) \geq 1-C_n^{(1)}p + O(n^2p^2) 
\end{align}
for all odd integers $n$. For even integers $n=2k$, one can discard one slot and run the optimal protocol for $n=2k-1$, which results in the same fidelity lower bound to the leading order. 
\end{proof}
\noindent Notice that the expression for $C_n^{(1)}$ in Thm.~\ref{thm1} exactly matches the first-order coefficient of the optimal $3$-slot fidelity obtained in Reference~\cite{zhao2026distillingunitaryoperationsnogo}, and agrees well with our observed numerical values for $n=5$. 

\subsection{Upper bound}
We now prove a corresponding upper bound based on quantum metrological arguments. 
\begin{thm} The noisy qubit unitary purification fidelity has the following upper bound 
\begin{align}
     f_{n,2}^{\mathrm{Seq}}(p) \leq 1- \frac{3p}{4n} + O\qty(p^2).
\end{align}
\end{thm}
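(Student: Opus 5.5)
We plan to reduce channel purification to multiparameter (or single-parameter) estimation of the unknown unitary, and then compare the quantum Fisher information obtainable from $n$ noisy uses with the QFI of the ideal output $\mathcal{U}$, using the regularized channel QFI formula from the preliminaries.

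\textbf{Step 1: metrological reduction.} Fix a one-parameter family $U_\theta = e^{-i\theta \sigma_z/2}U_0$ (and, by symmetry, similarly for $\sigma_x,\sigma_y$). Any sequential superchannel $\Xi$ with output $\mathcal{Q}_\theta=\Xi(\mathcal{N}_{\mathcal{U}_\theta,p}^{\otimes n})$ yields, via data processing along the comb, an output state $(\mathcal{Q}_\theta\otimes\mathbb{I})(\rho)$ whose QFI is at most the sequential QFI of $n$ uses of $\mathcal{N}_{\mathcal{U}_\theta,p}$. For sequential strategies this is bounded by the known bound $F^{\mathrm{Seq}}_n \le 4n\lVert\alpha\rVert + 4n(n-1)\lVert\beta\rVert(\ldots)$; since the depolarizing channel with $p>0$ satisfies the HNKS condition \eqref{eq:HNKS} (the Kraus span contains all of $\mathcal{L}(\mathbb{C}^2)$), we can choose $h$ with $\beta=0$, giving $F^{\mathrm{Seq}}_n\le n F_{\mathrm{reg}}(\mathcal{N}_{\theta})= 4n\min_{h:\beta=0}\lVert\alpha\rVert$. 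I would compute this minimum for the qubit depolarizing channel to leading order in $p$: I expect $F_{\mathrm{reg}} = \Theta(1/p)$, explicitly $F_{\mathrm{reg}}\le \frac{c}{p}(1+O(p))$ with a constant $c$ I would fix by choosing $h$ concretely (a Pauli-mixing $h$ on the $4$ Kraus operators $\sqrt{1-3p/4}\,U,\sqrt{p/4}\sigma_aU$). Finding an explicit feasible $h$ with $\beta=0$ and small $\lVert\alpha\rVert$ gives an upper bound, which is all that is needed.

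\textbf{Step 2: fidelity-to-QFI conversion.} On the other side, the output $\mathcal{Q}_\theta$ is close to $\mathcal{U}_\theta$ with infidelity $\delta:=1-f$; by covariance (twirling) we may assume $\mathcal{Q}_\theta=\mathcal{D}'\circ\mathcal{U}_\theta$ where $\mathcal{D}'$ is a $U(2)$-covariant, hence depolarizing, channel $\mathcal{D}_{p'}$ with $1-f=\frac{3}{4}p'$. The QFI of the Choi state of $\mathcal{D}_{p'}\circ\mathcal{U}_\theta$ (which already optimizes over inputs for a covariant channel up to constants) is explicitly computable; for small $p'$ it behaves like the QFI of a single noisy use but with $p'$ in place of $p$. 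Crucially, channel QFI is superadditive only in the regularized sense, so I would use instead the regularized statement: applying the protocol on $m$ independent blocks and letting $m\to\infty$, $F_{\mathrm{reg}}(\mathcal{D}_{p'}\circ\mathcal{U}_\theta)\le n F_{\mathrm{reg}}(\mathcal{D}_p\circ\mathcal{U}_\theta)$. With $F_{\mathrm{reg}}(\mathcal{D}_{p}\circ\mathcal{U}_\theta)\approx c/p$ for both, this gives $c/p'\le nc/p$, i.e. $p'\ge p/n$, hence $1-f\ge \frac{3p}{4n}+O(p^2)$.

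\textbf{Main obstacle.} The delicate step is obtaining the \emph{exact} leading constant of $F_{\mathrm{reg}}$ as $c/p$ with matching lower and upper bounds: the upper bound on the $n$-use side needs an explicit good $h$, while the single-output side needs a \emph{lower} bound on $F_{\mathrm{reg}}(\mathcal{D}_{p'}\circ\mathcal{U}_\theta)$, which I would get by exhibiting an explicit ancilla-assisted entangled input (or invoking that the $\min_h$ is attained and solving the small SDP in $h$ for qubits exactly). Any mismatch in constants would only alter the prefactor, so the $1/n$ scaling is robust; achieving precisely $3/4$ requires the two constants to coincide, which the common depolarizing form guarantees.
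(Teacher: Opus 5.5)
Your proposal follows the paper's proof: twirl so the output is $\mathcal{D}_{p'}\circ\mathcal{U}_t$ with $f=1-\frac{3}{4}p'$, run $m$ copies of the comb followed by an optimal parallel estimation to get $F_{\mathrm{reg}}(\mathcal{D}_{p'}\circ\mathcal{U}_t)\le n\,F_{\mathrm{reg}}(\mathcal{D}_{p}\circ\mathcal{U}_t)$, and expand using $F_{\mathrm{reg}}(\mathcal{D}_{p}\circ\mathcal{U}_t)=(\Delta H)^2\frac{2(1-p)^2}{p(3-2p)}\approx\frac{2(\Delta H)^2}{3p}$, a formula the paper quotes from Zhou--Jiang rather than deriving. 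One caution on your ``main obstacle'': the lower bound $F_{\mathrm{reg}}(\mathcal{D}_{p'}\circ\mathcal{U}_t)\gtrsim\frac{2(\Delta H)^2}{3p'}$ cannot come from a single-use ancilla-assisted input, because by data processing any single use gives QFI at most $(\Delta H)^2=O(1)$; you must use your second option, namely the attainability half of $F_{\mathrm{reg}}=4\min_{h:\beta=0}\lVert\alpha\rVert$ (which requires asymptotically many entangled uses) together with an exact evaluation of that minimum.
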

\begin{proof}
Since twirling does not change the target fidelity, we can assume without loss of generality that the superchannel is $U(2)$-covariant. For such covariant channels, we have 
\begin{align}
    \Xi( (\mathcal{D}_p\circ \mathcal{U}_t)^{\otimes n}) = \mathcal{D}_{p'} \circ \mathcal{U}_t, 
\end{align}
where the effective noise parameter $p'$ satisfies 
\begin{align}\label{eq:pprimef}
f_{n,2}^{(\cdot)}(p)=1-\frac{3}{4}p'
\end{align}
(See Appendix~\ref{ap:concatenation}). Now, consider a one-parameter family of channels $\mathcal{N}_{\mathcal{U}_t, p}$ with $U_t = e^{iHt}$, and consider the task of estimating the parameter $t$ given access to $\mathcal{N}_{\mathcal{U}_t, p}$. Let $F_{N}^{\mathrm{Seq}}(\mathcal{N}_{\mathcal{U}_t, p}), F_{N}^{\mathrm{Par}}(\mathcal{N}_{\mathcal{U}_t, p})$ denote the SLD QFI of the channel $\mathcal{N}_{\mathcal{U}_t, p}$ under parallel and sequential strategies with $N$ uses, respectively. The argument in Reference~\cite{Zhou_2021_CE} proves the bound 
\begin{align}
    F_n^{\mathrm{Par}}(\mathcal{N}_{\mathcal{U}_t, p}) &\leq F_n^{\mathrm{Seq}}(\mathcal{N}_{\mathcal{U}_t, p}) \nonumber \\
    &\leq 4 \min_h[n\lVert \alpha \rVert + n(n-1) \lVert \beta \rVert (\lVert \beta \rVert + 2 \sqrt{\lVert \alpha \rVert})]. 
\end{align}
For our choice of $\mathcal{N}_{\mathcal{U}_t,p}$, the Kraus operators satisfy the condition in Eq.~\eqref{eq:HNKS}. Therefore, the regularized channel SLD QFI becomes~\cite{Zhou_2021_CE}
\begin{align}
    F_{\mathrm{reg}}(\mathcal{N}_{\mathcal{U}_t, p}) &:= \lim_{n \to \infty} \frac{F_n^{\mathrm{Par}}
    (\mathcal{N}_{\mathcal{U}_t, p})}{n} \nonumber \\
    &= \lim_{n \to \infty} \frac{F_n^{\mathrm{Seq}}(\mathcal{N}_{{\mathcal{U}_t}, p})}{n} \nonumber \\
    &= (\Delta H)^2 \frac{2(1-p)^2}{p(3-2p)}, 
\end{align}
where $\Delta H$ denotes the difference between the largest and smallest eigenvalue of $H$. Now, consider applying the $n$-slot purification superchannel independently to $m$ blocks of $n$ noisy channels in parallel,
and then performing the optimal parallel estimation strategy on the resulting $m$ purified channels. This
is a valid sequential estimation strategy using $nm$ original noisy channels, so  
\begin{align}
    F_{m}^{\mathrm{Par}}(\mathcal{N}_{\mathcal{U}_t, p'})  = F_{m}^{\mathrm{Par}}(\Xi(\mathcal{N}_{\mathcal{U}_t, p}^{\otimes n})) \leq F_{nm}^{\mathrm{Seq}}(\mathcal{N}_{\mathcal{U}_t,p})
\end{align}
for any sequential superchannel $\Xi$. Dividing by $m$ with a fixed $n$ and taking the limit $m \to \infty$, one obtains 
\begin{align}
      (\Delta H)^2 \frac{2(1-p')^2}{p'(3-2p')} \leq n  (\Delta H)^2 \frac{2(1-p)^2}{p(3-2p)}. 
\end{align}
Expanding this inequality for fixed $n$ and $p\to 0$ gives
\begin{align}
     \frac{p}{n} + O(p^2) \leq p'. 
\end{align}
Substituting Eq.~\eqref{eq:pprimef} gives the desired upper bound. 
\end{proof}

\section{Discussion}
Combining the covariant QECC lower bound with the quantum metrological upper bound, we conclude 
\begin{align}
    1- \frac{9p}{8n} + O\qty(\frac{p}{n^2}) + O(n^2p^2) \leq f_{n,2}^{\mathrm{Seq}}(p) \leq 1- \frac{3p}{4n}  + O(p^2). 
\end{align}
Thus, the first-order coefficient in the $p\to 0$ expansion scales as $\Theta(\frac{1}{n})$, where $n$ is the number of channel uses. Note that we are interested in the order of limits, $\lim_{n\to \infty} \lim_{p\to 0}$. In this asymptotic sense, sequential strategies have no scaling advantage over parallel strategies, even though they can outperform parallel strategies at finite $n$. For odd $n$, we conjecture that sequential strategies have no advantage over parallel strategies, as implied by our numerical results for $n=3,5$. The origin of this even/odd distinction seems to come from the absence versus presence of the $U(2)$-covariant code constructed in our lower bound proof. 

Let us briefly compare the scaling of the first-order noise coefficient with those of existing protocols. First, the generalized W-state encoding introduced in~\cite{Faist_2020} is $U(d)$ covariant, but only achieves the suboptimal scaling $O(\frac{1}{\sqrt{n}})$ for erasure errors. The thermodynamic codes~\cite{Brand_o_2019} achieve the $O(\frac{1}{n})$ scaling~\cite{Zhou_2021} for the depolarizing noise, but are not $\mathrm{SU}(d)$ covariant. Thus, they cannot be used for universal purification of noisy unitary channels. Reference~\cite{Yang_2022} numerically observes $O(\frac{1}{n})$ scaling for depolarizing noise 
using a covariant QECC but falls short of providing an analytical guarantee. Reference~\cite{Kong_2022} establishes $\Theta(\frac{1}{n})$ scaling for erasure errors under the covariant QECC, but does not analyze the depolarizing channel or consider sequential protocols. Finally, concatenating the $3$-slot superchannel protocol in~\cite{zhao2026distillingunitaryoperationsnogo} yields the suboptimal scaling $O(\frac{1}{n^{0.8125}})$ for the first-order noise coefficient, as discussed in Appendix~\ref{ap:concatenation}.

\section{Conclusion}
In this work, we considered the problem of purifying an unknown qubit unitary channel under depolarizing noise.
We first formulated the problem as an SDP and provided numerical evidence that sequential strategies can outperform parallel strategies for a finite number of channel uses. In particular, our data indicate a sequential advantage for $n=4$, while the parallel and sequential optima seem to agree for $n=5$, suggesting a qualitative distinction between even and odd $n$. This demonstrates a strikingly rich structure of the channel purification problem.  We also provided a novel entanglement-assisted $\mathrm{U}(2)$ covariant quantum error-correcting code that allows us to suppress the first-order noise strength linearly with the code size $n$ in the low-noise regime and showed that this noise suppression scaling is optimal even when sequential strategies are allowed. Thus, although sequential strategies can have a strict advantage over parallel strategies for a finite number of channel uses, the asymptotic noise-suppression scaling in the low-noise regime is the same. 

There are many natural directions for future work. First, it would be interesting to determine the exact
leading-order coefficient of the $p\to 0$ expansion. We note that our lower bound coefficient $C_n^{(1)}$ exactly reproduces the analytical result for $n=3$~\cite{zhao2026distillingunitaryoperationsnogo}, and agrees well with our numerical result for $n=5$. This leads us to conjecture that the lower bound is tight to the leading order for all odd $n$. Rigorously deriving the exact leading-order coefficient under the most general sequential strategies remains an interesting direction for future work. Second, it could be interesting to analyze non-unital channels such as the amplitude damping channel in a similar setup. We note that our derivation for the lower bound could be straightforwardly adapted to other Pauli channels such as the dephasing noise (See Appendix~\ref{ap:dephasing}). Finally, it remains an important open problem to decide whether analogous scaling laws hold in arbitrary dimension $d$, and whether dimension-independent purification protocols exist. 
\acknowledgments
We thank Debbie Leung, Isaac Chuang, and Theerapat Tansuwannont for discussions. We thank ChatGPT-5.5 Thinking/Pro for discussions, proofreading, and assistance in writing MATLAB codes used to produce the data in TABLE I. The output was reviewed, examined, and validated by the authors. YY acknowledges the support of the National Natural Science Foundation of China via the Excellent Young Scientists Fund (Hong Kong and Macau) Project 12322516, and the Hong Kong
Research Grant Council (RGC) through the National Natural Science Foundation of China
(NSFC)/Research Grants Council (RGC) Joint Research
Scheme via Project N HKU7107/24.
\bibliography{main}

\providecommand{\noopsort}[1]{}\providecommand{\singleletter}[1]{#1}%
\begin{thebibliography}{54}%
\makeatletter
\providecommand \@ifxundefined [1]{%
 \@ifx{#1\undefined}
}%
\providecommand \@ifnum [1]{%
 \ifnum #1\expandafter \@firstoftwo
 \else \expandafter \@secondoftwo
 \fi
}%
\providecommand \@ifx [1]{%
 \ifx #1\expandafter \@firstoftwo
 \else \expandafter \@secondoftwo
 \fi
}%
\providecommand \natexlab [1]{#1}%
\providecommand \enquote  [1]{``#1''}%
\providecommand \bibnamefont  [1]{#1}%
\providecommand \bibfnamefont [1]{#1}%
\providecommand \citenamefont [1]{#1}%
\providecommand \href@noop [0]{\@secondoftwo}%
\providecommand \href [0]{\begingroup \@sanitize@url \@href}%
\providecommand \@href[1]{\@@startlink{#1}\@@href}%
\providecommand \@@href[1]{\endgroup#1\@@endlink}%
\providecommand \@sanitize@url [0]{\catcode `\\12\catcode `\$12\catcode `\&12\catcode `\#12\catcode `\^12\catcode `\_12\catcode `\%12\relax}%
\providecommand \@@startlink[1]{}%
\providecommand \@@endlink[0]{}%
\providecommand \url  [0]{\begingroup\@sanitize@url \@url }%
\providecommand \@url [1]{\endgroup\@href {#1}{\urlprefix }}%
\providecommand \urlprefix  [0]{URL }%
\providecommand \Eprint [0]{\href }%
\providecommand \doibase [0]{https://doi.org/}%
\providecommand \selectlanguage [0]{\@gobble}%
\providecommand \bibinfo  [0]{\@secondoftwo}%
\providecommand \bibfield  [0]{\@secondoftwo}%
\providecommand \translation [1]{[#1]}%
\providecommand \BibitemOpen [0]{}%
\providecommand \bibitemStop [0]{}%
\providecommand \bibitemNoStop [0]{.\EOS\space}%
\providecommand \EOS [0]{\spacefactor3000\relax}%
\providecommand \BibitemShut  [1]{\csname bibitem#1\endcsname}%
\let\auto@bib@innerbib\@empty
\bibitem [{\citenamefont {Gottesman}(1998)}]{PhysRevA.57.127}%
  \BibitemOpen
  \bibfield  {author} {\bibinfo {author} {\bibfnamefont {D.}~\bibnamefont {Gottesman}},\ }\bibfield  {title} {\bibinfo {title} {Theory of fault-tolerant quantum computation},\ }\href {https://doi.org/10.1103/PhysRevA.57.127} {\bibfield  {journal} {\bibinfo  {journal} {Phys. Rev. A}\ }\textbf {\bibinfo {volume} {57}},\ \bibinfo {pages} {127} (\bibinfo {year} {1998})}\BibitemShut {NoStop}%
\bibitem [{\citenamefont {Gottesman}(2014)}]{gottesman2014faulttolerantquantumcomputationconstant}%
  \BibitemOpen
  \bibfield  {author} {\bibinfo {author} {\bibfnamefont {D.}~\bibnamefont {Gottesman}},\ }\href {https://arxiv.org/abs/1310.2984} {\bibinfo {title} {Fault-tolerant quantum computation with constant overhead}} (\bibinfo {year} {2014}),\ \Eprint {https://arxiv.org/abs/1310.2984} {arXiv:1310.2984 [quant-ph]} \BibitemShut {NoStop}%
\bibitem [{\citenamefont {Gottesman}(1997)}]{Gottesman1997}%
  \BibitemOpen
  \bibfield  {author} {\bibinfo {author} {\bibfnamefont {D.}~\bibnamefont {Gottesman}},\ }\href@noop {} {\bibinfo {title} {{Stabilizer codes and quantum error correction}}} (\bibinfo {year} {1997}),\ \Eprint {https://arxiv.org/abs/quant-ph/9705052} {arXiv:quant-ph/9705052} \BibitemShut {NoStop}%
\bibitem [{\citenamefont {Calderbank}\ and\ \citenamefont {Shor}(1996)}]{PhysRevA.54.1098}%
  \BibitemOpen
  \bibfield  {author} {\bibinfo {author} {\bibfnamefont {A.~R.}\ \bibnamefont {Calderbank}}\ and\ \bibinfo {author} {\bibfnamefont {P.~W.}\ \bibnamefont {Shor}},\ }\bibfield  {title} {\bibinfo {title} {Good quantum error-correcting codes exist},\ }\href {https://doi.org/10.1103/PhysRevA.54.1098} {\bibfield  {journal} {\bibinfo  {journal} {Phys. Rev. A}\ }\textbf {\bibinfo {volume} {54}},\ \bibinfo {pages} {1098} (\bibinfo {year} {1996})}\BibitemShut {NoStop}%
\bibitem [{\citenamefont {Steane}(1996)}]{PhysRevA.54.4741}%
  \BibitemOpen
  \bibfield  {author} {\bibinfo {author} {\bibfnamefont {A.~M.}\ \bibnamefont {Steane}},\ }\bibfield  {title} {\bibinfo {title} {Simple quantum error-correcting codes},\ }\href {https://doi.org/10.1103/PhysRevA.54.4741} {\bibfield  {journal} {\bibinfo  {journal} {Phys. Rev. A}\ }\textbf {\bibinfo {volume} {54}},\ \bibinfo {pages} {4741} (\bibinfo {year} {1996})}\BibitemShut {NoStop}%
\bibitem [{\citenamefont {Kitaev}(2003)}]{Kitaev_2003}%
  \BibitemOpen
  \bibfield  {author} {\bibinfo {author} {\bibfnamefont {A.}~\bibnamefont {Kitaev}},\ }\bibfield  {title} {\bibinfo {title} {Fault-tolerant quantum computation by anyons},\ }\href {https://doi.org/10.1016/s0003-4916(02)00018-0} {\bibfield  {journal} {\bibinfo  {journal} {Annals of Physics}\ }\textbf {\bibinfo {volume} {303}},\ \bibinfo {pages} {2} (\bibinfo {year} {2003})}\BibitemShut {NoStop}%
\bibitem [{\citenamefont {Dennis}\ \emph {et~al.}(2002)\citenamefont {Dennis}, \citenamefont {Kitaev}, \citenamefont {Landahl},\ and\ \citenamefont {Preskill}}]{Dennis_2002}%
  \BibitemOpen
  \bibfield  {author} {\bibinfo {author} {\bibfnamefont {E.}~\bibnamefont {Dennis}}, \bibinfo {author} {\bibfnamefont {A.}~\bibnamefont {Kitaev}}, \bibinfo {author} {\bibfnamefont {A.}~\bibnamefont {Landahl}},\ and\ \bibinfo {author} {\bibfnamefont {J.}~\bibnamefont {Preskill}},\ }\bibfield  {title} {\bibinfo {title} {Topological quantum memory},\ }\href {https://doi.org/10.1063/1.1499754} {\bibfield  {journal} {\bibinfo  {journal} {Journal of Mathematical Physics}\ }\textbf {\bibinfo {volume} {43}},\ \bibinfo {pages} {4452–4505} (\bibinfo {year} {2002})}\BibitemShut {NoStop}%
\bibitem [{\citenamefont {Bombin}\ and\ \citenamefont {Martin-Delgado}(2006)}]{PhysRevLett.97.180501}%
  \BibitemOpen
  \bibfield  {author} {\bibinfo {author} {\bibfnamefont {H.}~\bibnamefont {Bombin}}\ and\ \bibinfo {author} {\bibfnamefont {M.~A.}\ \bibnamefont {Martin-Delgado}},\ }\bibfield  {title} {\bibinfo {title} {Topological quantum distillation},\ }\href {https://doi.org/10.1103/PhysRevLett.97.180501} {\bibfield  {journal} {\bibinfo  {journal} {Phys. Rev. Lett.}\ }\textbf {\bibinfo {volume} {97}},\ \bibinfo {pages} {180501} (\bibinfo {year} {2006})}\BibitemShut {NoStop}%
\bibitem [{\citenamefont {Brun}\ \emph {et~al.}(2006)\citenamefont {Brun}, \citenamefont {Devetak},\ and\ \citenamefont {Hsieh}}]{Brun_2006}%
  \BibitemOpen
  \bibfield  {author} {\bibinfo {author} {\bibfnamefont {T.}~\bibnamefont {Brun}}, \bibinfo {author} {\bibfnamefont {I.}~\bibnamefont {Devetak}},\ and\ \bibinfo {author} {\bibfnamefont {M.-H.}\ \bibnamefont {Hsieh}},\ }\bibfield  {title} {\bibinfo {title} {Correcting quantum errors with entanglement},\ }\href {https://doi.org/10.1126/science.1131563} {\bibfield  {journal} {\bibinfo  {journal} {Science}\ }\textbf {\bibinfo {volume} {314}},\ \bibinfo {pages} {436–439} (\bibinfo {year} {2006})}\BibitemShut {NoStop}%
\bibitem [{\citenamefont {Cirac}\ \emph {et~al.}(1999)\citenamefont {Cirac}, \citenamefont {Ekert},\ and\ \citenamefont {Macchiavello}}]{PhysRevLett.82.4344}%
  \BibitemOpen
  \bibfield  {author} {\bibinfo {author} {\bibfnamefont {J.~I.}\ \bibnamefont {Cirac}}, \bibinfo {author} {\bibfnamefont {A.~K.}\ \bibnamefont {Ekert}},\ and\ \bibinfo {author} {\bibfnamefont {C.}~\bibnamefont {Macchiavello}},\ }\bibfield  {title} {\bibinfo {title} {Optimal purification of single qubits},\ }\href {https://doi.org/10.1103/PhysRevLett.82.4344} {\bibfield  {journal} {\bibinfo  {journal} {Phys. Rev. Lett.}\ }\textbf {\bibinfo {volume} {82}},\ \bibinfo {pages} {4344} (\bibinfo {year} {1999})}\BibitemShut {NoStop}%
\bibitem [{\citenamefont {Childs}\ \emph {et~al.}(2025)\citenamefont {Childs}, \citenamefont {Fu}, \citenamefont {Leung}, \citenamefont {Li}, \citenamefont {Ozols},\ and\ \citenamefont {Vyas}}]{Childs_2025}%
  \BibitemOpen
  \bibfield  {author} {\bibinfo {author} {\bibfnamefont {A.~M.}\ \bibnamefont {Childs}}, \bibinfo {author} {\bibfnamefont {H.}~\bibnamefont {Fu}}, \bibinfo {author} {\bibfnamefont {D.}~\bibnamefont {Leung}}, \bibinfo {author} {\bibfnamefont {Z.}~\bibnamefont {Li}}, \bibinfo {author} {\bibfnamefont {M.}~\bibnamefont {Ozols}},\ and\ \bibinfo {author} {\bibfnamefont {V.}~\bibnamefont {Vyas}},\ }\bibfield  {title} {\bibinfo {title} {Streaming quantum state purification},\ }\href {https://doi.org/10.22331/q-2025-01-21-1603} {\bibfield  {journal} {\bibinfo  {journal} {Quantum}\ }\textbf {\bibinfo {volume} {9}},\ \bibinfo {pages} {1603} (\bibinfo {year} {2025})}\BibitemShut {NoStop}%
\bibitem [{\citenamefont {Grier}\ \emph {et~al.}(2025)\citenamefont {Grier}, \citenamefont {Leung}, \citenamefont {Li}, \citenamefont {Pashayan},\ and\ \citenamefont {Schaeffer}}]{grier2025streamingquantumstatepurification}%
  \BibitemOpen
  \bibfield  {author} {\bibinfo {author} {\bibfnamefont {D.}~\bibnamefont {Grier}}, \bibinfo {author} {\bibfnamefont {D.}~\bibnamefont {Leung}}, \bibinfo {author} {\bibfnamefont {Z.}~\bibnamefont {Li}}, \bibinfo {author} {\bibfnamefont {H.}~\bibnamefont {Pashayan}},\ and\ \bibinfo {author} {\bibfnamefont {L.}~\bibnamefont {Schaeffer}},\ }\href {https://arxiv.org/abs/2503.22644} {\bibinfo {title} {Streaming quantum state purification for general mixed states}} (\bibinfo {year} {2025}),\ \Eprint {https://arxiv.org/abs/2503.22644} {arXiv:2503.22644 [quant-ph]} \BibitemShut {NoStop}%
\bibitem [{\citenamefont {Brahmachari}\ \emph {et~al.}(2025)\citenamefont {Brahmachari}, \citenamefont {Hulse}, \citenamefont {Pfister},\ and\ \citenamefont {Marvian}}]{brahmachari2025optimalqubitpurificationunitary}%
  \BibitemOpen
  \bibfield  {author} {\bibinfo {author} {\bibfnamefont {S.}~\bibnamefont {Brahmachari}}, \bibinfo {author} {\bibfnamefont {A.}~\bibnamefont {Hulse}}, \bibinfo {author} {\bibfnamefont {H.~D.}\ \bibnamefont {Pfister}},\ and\ \bibinfo {author} {\bibfnamefont {I.}~\bibnamefont {Marvian}},\ }\href {https://arxiv.org/abs/2508.05046} {\bibinfo {title} {Optimal qubit purification and unitary schur sampling via random swap tests}} (\bibinfo {year} {2025}),\ \Eprint {https://arxiv.org/abs/2508.05046} {arXiv:2508.05046 [quant-ph]} \BibitemShut {NoStop}%
\bibitem [{\citenamefont {Li}\ \emph {et~al.}(2025)\citenamefont {Li}, \citenamefont {Fu}, \citenamefont {Isogawa}, \citenamefont {Silva},\ and\ \citenamefont {Chuang}}]{li2025optimalquantumpurityamplification}%
  \BibitemOpen
  \bibfield  {author} {\bibinfo {author} {\bibfnamefont {Z.}~\bibnamefont {Li}}, \bibinfo {author} {\bibfnamefont {H.}~\bibnamefont {Fu}}, \bibinfo {author} {\bibfnamefont {T.}~\bibnamefont {Isogawa}}, \bibinfo {author} {\bibfnamefont {C.}~\bibnamefont {Silva}},\ and\ \bibinfo {author} {\bibfnamefont {I.}~\bibnamefont {Chuang}},\ }\href {https://arxiv.org/abs/2409.18167} {\bibinfo {title} {Optimal quantum purity amplification}} (\bibinfo {year} {2025}),\ \Eprint {https://arxiv.org/abs/2409.18167} {arXiv:2409.18167 [quant-ph]} \BibitemShut {NoStop}%
\bibitem [{\citenamefont {Chiribella}\ \emph {et~al.}(2008{\natexlab{a}})\citenamefont {Chiribella}, \citenamefont {D’Ariano},\ and\ \citenamefont {Perinotti}}]{Chiribella_2008}%
  \BibitemOpen
  \bibfield  {author} {\bibinfo {author} {\bibfnamefont {G.}~\bibnamefont {Chiribella}}, \bibinfo {author} {\bibfnamefont {G.~M.}\ \bibnamefont {D’Ariano}},\ and\ \bibinfo {author} {\bibfnamefont {P.}~\bibnamefont {Perinotti}},\ }\bibfield  {title} {\bibinfo {title} {Transforming quantum operations: Quantum supermaps},\ }\href {https://doi.org/10.1209/0295-5075/83/30004} {\bibfield  {journal} {\bibinfo  {journal} {EPL (Europhysics Letters)}\ }\textbf {\bibinfo {volume} {83}},\ \bibinfo {pages} {30004} (\bibinfo {year} {2008}{\natexlab{a}})}\BibitemShut {NoStop}%
\bibitem [{\citenamefont {Liu}\ \emph {et~al.}(2025)\citenamefont {Liu}, \citenamefont {Zhang}, \citenamefont {Fei},\ and\ \citenamefont {Cai}}]{PRXQuantum.6.020325}%
  \BibitemOpen
  \bibfield  {author} {\bibinfo {author} {\bibfnamefont {Z.}~\bibnamefont {Liu}}, \bibinfo {author} {\bibfnamefont {X.}~\bibnamefont {Zhang}}, \bibinfo {author} {\bibfnamefont {Y.-Y.}\ \bibnamefont {Fei}},\ and\ \bibinfo {author} {\bibfnamefont {Z.}~\bibnamefont {Cai}},\ }\bibfield  {title} {\bibinfo {title} {Virtual channel purification},\ }\href {https://doi.org/10.1103/PRXQuantum.6.020325} {\bibfield  {journal} {\bibinfo  {journal} {PRX Quantum}\ }\textbf {\bibinfo {volume} {6}},\ \bibinfo {pages} {020325} (\bibinfo {year} {2025})}\BibitemShut {NoStop}%
\bibitem [{\citenamefont {Zhao}\ \emph {et~al.}(2026)\citenamefont {Zhao}, \citenamefont {Chen}, \citenamefont {Zhen}, \citenamefont {Zhu}, \citenamefont {Chen},\ and\ \citenamefont {Wang}}]{zhao2026distillingunitaryoperationsnogo}%
  \BibitemOpen
  \bibfield  {author} {\bibinfo {author} {\bibfnamefont {J.}~\bibnamefont {Zhao}}, \bibinfo {author} {\bibfnamefont {Y.-A.}\ \bibnamefont {Chen}}, \bibinfo {author} {\bibfnamefont {G.}~\bibnamefont {Zhen}}, \bibinfo {author} {\bibfnamefont {C.}~\bibnamefont {Zhu}}, \bibinfo {author} {\bibfnamefont {R.}~\bibnamefont {Chen}},\ and\ \bibinfo {author} {\bibfnamefont {X.}~\bibnamefont {Wang}},\ }\href {https://arxiv.org/abs/2604.01048} {\bibinfo {title} {Distilling unitary operations: A no-go theorem and minimal realization}} (\bibinfo {year} {2026}),\ \Eprint {https://arxiv.org/abs/2604.01048} {arXiv:2604.01048 [quant-ph]} \BibitemShut {NoStop}%
\bibitem [{\citenamefont {Bacon}\ \emph {et~al.}(2005)\citenamefont {Bacon}, \citenamefont {Chuang},\ and\ \citenamefont {Harrow}}]{bacon2005quantumschurtransformi}%
  \BibitemOpen
  \bibfield  {author} {\bibinfo {author} {\bibfnamefont {D.}~\bibnamefont {Bacon}}, \bibinfo {author} {\bibfnamefont {I.~L.}\ \bibnamefont {Chuang}},\ and\ \bibinfo {author} {\bibfnamefont {A.~W.}\ \bibnamefont {Harrow}},\ }\href {https://arxiv.org/abs/quant-ph/0601001} {\bibinfo {title} {The quantum schur transform: I. efficient qudit circuits}} (\bibinfo {year} {2005}),\ \Eprint {https://arxiv.org/abs/quant-ph/0601001} {arXiv:quant-ph/0601001 [quant-ph]} \BibitemShut {NoStop}%
\bibitem [{\citenamefont {Kirby}\ and\ \citenamefont {Strauch}(2018)}]{Kirby_2018}%
  \BibitemOpen
  \bibfield  {author} {\bibinfo {author} {\bibfnamefont {W.~M.}\ \bibnamefont {Kirby}}\ and\ \bibinfo {author} {\bibfnamefont {F.~W.}\ \bibnamefont {Strauch}},\ }\bibfield  {title} {\bibinfo {title} {A practical quantum algorithm for the schur transform},\ }\href {https://doi.org/10.26421/qic18.9-10-1} {\bibfield  {journal} {\bibinfo  {journal} {Quantum Information and Computation}\ }\textbf {\bibinfo {volume} {18}},\ \bibinfo {pages} {721–742} (\bibinfo {year} {2018})}\BibitemShut {NoStop}%
\bibitem [{\citenamefont {Burchardt}\ \emph {et~al.}(2025)\citenamefont {Burchardt}, \citenamefont {Fei}, \citenamefont {Grinko}, \citenamefont {Larocca}, \citenamefont {Ozols}, \citenamefont {Timmerman},\ and\ \citenamefont {Visnevskyi}}]{burchardt2025highdimensionalquantumschurtransforms}%
  \BibitemOpen
  \bibfield  {author} {\bibinfo {author} {\bibfnamefont {A.}~\bibnamefont {Burchardt}}, \bibinfo {author} {\bibfnamefont {J.}~\bibnamefont {Fei}}, \bibinfo {author} {\bibfnamefont {D.}~\bibnamefont {Grinko}}, \bibinfo {author} {\bibfnamefont {M.}~\bibnamefont {Larocca}}, \bibinfo {author} {\bibfnamefont {M.}~\bibnamefont {Ozols}}, \bibinfo {author} {\bibfnamefont {S.}~\bibnamefont {Timmerman}},\ and\ \bibinfo {author} {\bibfnamefont {V.}~\bibnamefont {Visnevskyi}},\ }\href {https://arxiv.org/abs/2509.22640} {\bibinfo {title} {High-dimensional quantum schur transforms}} (\bibinfo {year} {2025}),\ \Eprint {https://arxiv.org/abs/2509.22640} {arXiv:2509.22640 [quant-ph]} \BibitemShut {NoStop}%
\bibitem [{\citenamefont {Gour}(2019)}]{8678741}%
  \BibitemOpen
  \bibfield  {author} {\bibinfo {author} {\bibfnamefont {G.}~\bibnamefont {Gour}},\ }\bibfield  {title} {\bibinfo {title} {Comparison of quantum channels by superchannels},\ }\href {https://doi.org/10.1109/TIT.2019.2907989} {\bibfield  {journal} {\bibinfo  {journal} {IEEE Transactions on Information Theory}\ }\textbf {\bibinfo {volume} {65}},\ \bibinfo {pages} {5880} (\bibinfo {year} {2019})}\BibitemShut {NoStop}%
\bibitem [{\citenamefont {Chiribella}\ \emph {et~al.}(2008{\natexlab{b}})\citenamefont {Chiribella}, \citenamefont {D'Ariano},\ and\ \citenamefont {Perinotti}}]{chiribella2008quantum}%
  \BibitemOpen
  \bibfield  {author} {\bibinfo {author} {\bibfnamefont {G.}~\bibnamefont {Chiribella}}, \bibinfo {author} {\bibfnamefont {G.~M.}\ \bibnamefont {D'Ariano}},\ and\ \bibinfo {author} {\bibfnamefont {P.}~\bibnamefont {Perinotti}},\ }\bibfield  {title} {\bibinfo {title} {{Quantum Circuit Architecture}},\ }\href {https://doi.org/10.1103/PhysRevLett.101.060401} {\bibfield  {journal} {\bibinfo  {journal} {Phys. Rev. Lett.}\ }\textbf {\bibinfo {volume} {101}},\ \bibinfo {pages} {060401} (\bibinfo {year} {2008}{\natexlab{b}})},\ \Eprint {https://arxiv.org/abs/0712.1325} {arXiv:0712.1325} \BibitemShut {NoStop}%
\bibitem [{\citenamefont {Knill}\ \emph {et~al.}(2000)\citenamefont {Knill}, \citenamefont {Laflamme},\ and\ \citenamefont {Viola}}]{Knill_2000}%
  \BibitemOpen
  \bibfield  {author} {\bibinfo {author} {\bibfnamefont {E.}~\bibnamefont {Knill}}, \bibinfo {author} {\bibfnamefont {R.}~\bibnamefont {Laflamme}},\ and\ \bibinfo {author} {\bibfnamefont {L.}~\bibnamefont {Viola}},\ }\bibfield  {title} {\bibinfo {title} {Theory of quantum error correction for general noise},\ }\href {https://doi.org/10.1103/physrevlett.84.2525} {\bibfield  {journal} {\bibinfo  {journal} {Phys. Rev. Lett.}\ }\textbf {\bibinfo {volume} {84}},\ \bibinfo {pages} {2525} (\bibinfo {year} {2000})}\BibitemShut {NoStop}%
\bibitem [{\citenamefont {Barnum}\ and\ \citenamefont {Knill}(2002)}]{Barnum_2002}%
  \BibitemOpen
  \bibfield  {author} {\bibinfo {author} {\bibfnamefont {H.}~\bibnamefont {Barnum}}\ and\ \bibinfo {author} {\bibfnamefont {E.}~\bibnamefont {Knill}},\ }\bibfield  {title} {\bibinfo {title} {Reversing quantum dynamics with near-optimal quantum and classical fidelity},\ }\href {https://doi.org/10.1063/1.1459754} {\bibfield  {journal} {\bibinfo  {journal} {Journal of Mathematical Physics}\ }\textbf {\bibinfo {volume} {43}},\ \bibinfo {pages} {2097} (\bibinfo {year} {2002})}\BibitemShut {NoStop}%
\bibitem [{\citenamefont {B\'eny}\ and\ \citenamefont {Oreshkov}(2010)}]{PhysRevLett.104.120501}%
  \BibitemOpen
  \bibfield  {author} {\bibinfo {author} {\bibfnamefont {C.}~\bibnamefont {B\'eny}}\ and\ \bibinfo {author} {\bibfnamefont {O.}~\bibnamefont {Oreshkov}},\ }\bibfield  {title} {\bibinfo {title} {General conditions for approximate quantum error correction and near-optimal recovery channels},\ }\href {https://doi.org/10.1103/PhysRevLett.104.120501} {\bibfield  {journal} {\bibinfo  {journal} {Phys. Rev. Lett.}\ }\textbf {\bibinfo {volume} {104}},\ \bibinfo {pages} {120501} (\bibinfo {year} {2010})}\BibitemShut {NoStop}%
\bibitem [{\citenamefont {Ng}\ and\ \citenamefont {Mandayam}(2010)}]{PhysRevA.81.062342}%
  \BibitemOpen
  \bibfield  {author} {\bibinfo {author} {\bibfnamefont {H.~K.}\ \bibnamefont {Ng}}\ and\ \bibinfo {author} {\bibfnamefont {P.}~\bibnamefont {Mandayam}},\ }\bibfield  {title} {\bibinfo {title} {Simple approach to approximate quantum error correction based on the transpose channel},\ }\href {https://doi.org/10.1103/PhysRevA.81.062342} {\bibfield  {journal} {\bibinfo  {journal} {Phys. Rev. A}\ }\textbf {\bibinfo {volume} {81}},\ \bibinfo {pages} {062342} (\bibinfo {year} {2010})}\BibitemShut {NoStop}%
\bibitem [{Note1()}]{Note1}%
  \BibitemOpen
  \bibinfo {note} {Note that the references~\cite {Zhou_2021, Kubica_2021, Faist_2020} use the square root fidelity $f_{\protect \mathrm {root}}:=\Tr \protect \sqrt {\rho ^{1/2} \sigma \rho ^{1/2}}$.}\BibitemShut {Stop}%
\bibitem [{\citenamefont {Faist}\ \emph {et~al.}(2020)\citenamefont {Faist}, \citenamefont {Nezami}, \citenamefont {Albert}, \citenamefont {Salton}, \citenamefont {Pastawski}, \citenamefont {Hayden},\ and\ \citenamefont {Preskill}}]{Faist_2020}%
  \BibitemOpen
  \bibfield  {author} {\bibinfo {author} {\bibfnamefont {P.}~\bibnamefont {Faist}}, \bibinfo {author} {\bibfnamefont {S.}~\bibnamefont {Nezami}}, \bibinfo {author} {\bibfnamefont {V.~V.}\ \bibnamefont {Albert}}, \bibinfo {author} {\bibfnamefont {G.}~\bibnamefont {Salton}}, \bibinfo {author} {\bibfnamefont {F.}~\bibnamefont {Pastawski}}, \bibinfo {author} {\bibfnamefont {P.}~\bibnamefont {Hayden}},\ and\ \bibinfo {author} {\bibfnamefont {J.}~\bibnamefont {Preskill}},\ }\bibfield  {title} {\bibinfo {title} {Continuous symmetries and approximate quantum error correction},\ }\href {https://doi.org/10.1103/physrevx.10.041018} {\bibfield  {journal} {\bibinfo  {journal} {Phys. Rev. X}\ }\textbf {\bibinfo {volume} {10}},\ \bibinfo {pages} {041018} (\bibinfo {year} {2020})}\BibitemShut {NoStop}%
\bibitem [{\citenamefont {Helstrom}(1969)}]{cramerrao1969}%
  \BibitemOpen
  \bibfield  {author} {\bibinfo {author} {\bibfnamefont {C.~W.}\ \bibnamefont {Helstrom}},\ }\bibfield  {title} {\bibinfo {title} {Quantum detection and estimation theory},\ }\href {https://doi.org/10.1007/BF01007479} {\bibfield  {journal} {\bibinfo  {journal} {Journal of Statistical Physics}\ }\textbf {\bibinfo {volume} {1}},\ \bibinfo {pages} {231} (\bibinfo {year} {1969})}\BibitemShut {NoStop}%
\bibitem [{\citenamefont {A.S.Holevo}(1982)}]{Holevo}%
  \BibitemOpen
  \bibfield  {author} {\bibinfo {author} {\bibnamefont {A.S.Holevo}},\ }\href@noop {} {\emph {\bibinfo {title} {Probabilistic and Statistical Aspects of Quantum Theory}}}\ (\bibinfo  {publisher} {Springer Science $\&$ Business Media},\ \bibinfo {year} {1982})\BibitemShut {NoStop}%
\bibitem [{\citenamefont {Fujiwara}\ and\ \citenamefont {Imai}(2008)}]{FujiwaraImai2008FibreBundle}%
  \BibitemOpen
  \bibfield  {author} {\bibinfo {author} {\bibfnamefont {A.}~\bibnamefont {Fujiwara}}\ and\ \bibinfo {author} {\bibfnamefont {H.}~\bibnamefont {Imai}},\ }\bibfield  {title} {\bibinfo {title} {A fibre bundle over manifolds of quantum channels and its application to quantum statistics},\ }\href {https://doi.org/10.1088/1751-8113/41/25/255304} {\bibfield  {journal} {\bibinfo  {journal} {Journal of Physics A: Mathematical and Theoretical}\ }\textbf {\bibinfo {volume} {41}},\ \bibinfo {pages} {255304} (\bibinfo {year} {2008})}\BibitemShut {NoStop}%
\bibitem [{\citenamefont {Zhou}\ and\ \citenamefont {Jiang}(2021)}]{Zhou_2021_CE}%
  \BibitemOpen
  \bibfield  {author} {\bibinfo {author} {\bibfnamefont {S.}~\bibnamefont {Zhou}}\ and\ \bibinfo {author} {\bibfnamefont {L.}~\bibnamefont {Jiang}},\ }\bibfield  {title} {\bibinfo {title} {Asymptotic theory of quantum channel estimation},\ }\href {https://doi.org/10.1103/prxquantum.2.010343} {\bibfield  {journal} {\bibinfo  {journal} {PRX Quantum}\ }\textbf {\bibinfo {volume} {2}},\ \bibinfo {pages} {010343} (\bibinfo {year} {2021})}\BibitemShut {NoStop}%
\bibitem [{\citenamefont {Kubica}\ and\ \citenamefont {Demkowicz-Dobrzański}(2021)}]{Kubica_2021}%
  \BibitemOpen
  \bibfield  {author} {\bibinfo {author} {\bibfnamefont {A.}~\bibnamefont {Kubica}}\ and\ \bibinfo {author} {\bibfnamefont {R.}~\bibnamefont {Demkowicz-Dobrzański}},\ }\bibfield  {title} {\bibinfo {title} {Using quantum metrological bounds in quantum error correction: A simple proof of the approximate eastin-knill theorem},\ }\href {https://doi.org/10.1103/physrevlett.126.150503} {\bibfield  {journal} {\bibinfo  {journal} {Phys. Rev. Lett.}\ }\textbf {\bibinfo {volume} {126}},\ \bibinfo {pages} {150503} (\bibinfo {year} {2021})}\BibitemShut {NoStop}%
\bibitem [{\citenamefont {Zhou}\ \emph {et~al.}(2021)\citenamefont {Zhou}, \citenamefont {Liu},\ and\ \citenamefont {Jiang}}]{Zhou_2021}%
  \BibitemOpen
  \bibfield  {author} {\bibinfo {author} {\bibfnamefont {S.}~\bibnamefont {Zhou}}, \bibinfo {author} {\bibfnamefont {Z.-W.}\ \bibnamefont {Liu}},\ and\ \bibinfo {author} {\bibfnamefont {L.}~\bibnamefont {Jiang}},\ }\bibfield  {title} {\bibinfo {title} {New perspectives on covariant quantum error correction},\ }\href {https://doi.org/10.22331/q-2021-08-09-521} {\bibfield  {journal} {\bibinfo  {journal} {Quantum}\ }\textbf {\bibinfo {volume} {5}},\ \bibinfo {pages} {521} (\bibinfo {year} {2021})}\BibitemShut {NoStop}%
\bibitem [{\citenamefont {Chiribella}\ and\ \citenamefont {Ebler}(2016)}]{chiribella2016optimal}%
  \BibitemOpen
  \bibfield  {author} {\bibinfo {author} {\bibfnamefont {G.}~\bibnamefont {Chiribella}}\ and\ \bibinfo {author} {\bibfnamefont {D.}~\bibnamefont {Ebler}},\ }\bibfield  {title} {\bibinfo {title} {Optimal quantum networks and one-shot entropies},\ }\href@noop {} {\bibfield  {journal} {\bibinfo  {journal} {New Journal of Physics}\ }\textbf {\bibinfo {volume} {18}},\ \bibinfo {pages} {093053} (\bibinfo {year} {2016})}\BibitemShut {NoStop}%
\bibitem [{\citenamefont {Grinko}\ and\ \citenamefont {Ozols}(2024)}]{Grinko_2024}%
  \BibitemOpen
  \bibfield  {author} {\bibinfo {author} {\bibfnamefont {D.}~\bibnamefont {Grinko}}\ and\ \bibinfo {author} {\bibfnamefont {M.}~\bibnamefont {Ozols}},\ }\bibfield  {title} {\bibinfo {title} {Linear programming with unitary-equivariant constraints},\ }\href {https://doi.org/10.1007/s00220-024-05108-1} {\bibfield  {journal} {\bibinfo  {journal} {Communications in Mathematical Physics}\ }\textbf {\bibinfo {volume} {405}},\ \bibinfo {pages} {278} (\bibinfo {year} {2024})}\BibitemShut {NoStop}%
\bibitem [{\citenamefont {Yoshida}\ \emph {et~al.}(2023)\citenamefont {Yoshida}, \citenamefont {Soeda},\ and\ \citenamefont {Murao}}]{Yoshida_2023}%
  \BibitemOpen
  \bibfield  {author} {\bibinfo {author} {\bibfnamefont {S.}~\bibnamefont {Yoshida}}, \bibinfo {author} {\bibfnamefont {A.}~\bibnamefont {Soeda}},\ and\ \bibinfo {author} {\bibfnamefont {M.}~\bibnamefont {Murao}},\ }\bibfield  {title} {\bibinfo {title} {Reversing unknown qubit-unitary operation, deterministically and exactly},\ }\href {https://doi.org/10.1103/physrevlett.131.120602} {\bibfield  {journal} {\bibinfo  {journal} {Phys. Rev. Lett.}\ }\textbf {\bibinfo {volume} {131}},\ \bibinfo {pages} {120602} (\bibinfo {year} {2023})}\BibitemShut {NoStop}%
\bibitem [{\citenamefont {Quintino}\ and\ \citenamefont {Ebler}(2022)}]{Quintino_2022}%
  \BibitemOpen
  \bibfield  {author} {\bibinfo {author} {\bibfnamefont {M.~T.}\ \bibnamefont {Quintino}}\ and\ \bibinfo {author} {\bibfnamefont {D.}~\bibnamefont {Ebler}},\ }\bibfield  {title} {\bibinfo {title} {Deterministic transformations between unitary operations: Exponential advantage with adaptive quantum circuits and the power of indefinite causality},\ }\href {https://doi.org/10.22331/q-2022-03-31-679} {\bibfield  {journal} {\bibinfo  {journal} {Quantum}\ }\textbf {\bibinfo {volume} {6}},\ \bibinfo {pages} {679} (\bibinfo {year} {2022})}\BibitemShut {NoStop}%
\bibitem [{\citenamefont {MATLAB}(2021)}]{matlab}%
  \BibitemOpen
  \bibfield  {author} {\bibinfo {author} {\bibnamefont {MATLAB}},\ }\href@noop {} {\emph {\bibinfo {title} {version 9.11.0 (R2021b)}}}\ (\bibinfo  {publisher} {The MathWorks Inc.},\ \bibinfo {address} {Natick, Massachusetts},\ \bibinfo {year} {2021})\BibitemShut {NoStop}%
\bibitem [{\citenamefont {Grant}\ and\ \citenamefont {Boyd}(2020)}]{cvx}%
  \BibitemOpen
  \bibfield  {author} {\bibinfo {author} {\bibfnamefont {M.}~\bibnamefont {Grant}}\ and\ \bibinfo {author} {\bibfnamefont {S.}~\bibnamefont {Boyd}},\ }\href@noop {} {\bibinfo {title} {{CVX}: Matlab software for disciplined convex programming, version 2.2}},\ \bibinfo {howpublished} {\url{http://cvxr.com/cvx}} (\bibinfo {year} {2020})\BibitemShut {NoStop}%
\bibitem [{\citenamefont {Grant}\ and\ \citenamefont {Boyd}(2008)}]{gb08}%
  \BibitemOpen
  \bibfield  {author} {\bibinfo {author} {\bibfnamefont {M.}~\bibnamefont {Grant}}\ and\ \bibinfo {author} {\bibfnamefont {S.}~\bibnamefont {Boyd}},\ }\bibfield  {title} {\bibinfo {title} {Graph implementations for nonsmooth convex programs},\ }in\ \href@noop {} {\emph {\bibinfo {booktitle} {Recent Advances in Learning and Control}}},\ \bibinfo {series and number} {Lecture Notes in Control and Information Sciences},\ \bibinfo {editor} {edited by\ \bibinfo {editor} {\bibfnamefont {V.}~\bibnamefont {Blondel}}, \bibinfo {editor} {\bibfnamefont {S.}~\bibnamefont {Boyd}},\ and\ \bibinfo {editor} {\bibfnamefont {H.}~\bibnamefont {Kimura}}}\ (\bibinfo  {publisher} {Springer-Verlag Limited},\ \bibinfo {year} {2008})\ pp.\ \bibinfo {pages} {95--110},\ \bibinfo {note} {\url{http://stanford.edu/~boyd/graph_dcp.html}}\BibitemShut {NoStop}%
\bibitem [{sdp()}]{sdpt3}%
  \BibitemOpen
  \href@noop {} {}\bibinfo {howpublished} {\url{http://www.math.nus.edu.sg/.mattohkc/sdpt3.html}}\BibitemShut {NoStop}%
\bibitem [{\citenamefont {Toh}\ \emph {et~al.}(1999)\citenamefont {Toh}, \citenamefont {Todd},\ and\ \citenamefont {T{\"u}t{\"u}nc{\"u}}}]{toh1999sdpt3}%
  \BibitemOpen
  \bibfield  {author} {\bibinfo {author} {\bibfnamefont {K.-C.}\ \bibnamefont {Toh}}, \bibinfo {author} {\bibfnamefont {M.~J.}\ \bibnamefont {Todd}},\ and\ \bibinfo {author} {\bibfnamefont {R.~H.}\ \bibnamefont {T{\"u}t{\"u}nc{\"u}}},\ }\bibfield  {title} {\bibinfo {title} {Sdpt3—a matlab software package for semidefinite programming, version 1.3},\ }\href {https://doi.org/10.1080/10556789908805762} {\bibfield  {journal} {\bibinfo  {journal} {Optim. Methods Software}\ }\textbf {\bibinfo {volume} {11}},\ \bibinfo {pages} {545} (\bibinfo {year} {1999})}\BibitemShut {NoStop}%
\bibitem [{\citenamefont {T{\"u}t{\"u}nc{\"u}}\ \emph {et~al.}(2003)\citenamefont {T{\"u}t{\"u}nc{\"u}}, \citenamefont {Toh},\ and\ \citenamefont {Todd}}]{tutuncu2003solving}%
  \BibitemOpen
  \bibfield  {author} {\bibinfo {author} {\bibfnamefont {R.~H.}\ \bibnamefont {T{\"u}t{\"u}nc{\"u}}}, \bibinfo {author} {\bibfnamefont {K.-C.}\ \bibnamefont {Toh}},\ and\ \bibinfo {author} {\bibfnamefont {M.~J.}\ \bibnamefont {Todd}},\ }\bibfield  {title} {\bibinfo {title} {Solving semidefinite-quadratic-linear programs using sdpt3},\ }\href {https://doi.org/10.1007/s10107-002-0347-5} {\bibfield  {journal} {\bibinfo  {journal} {Math. Program.}\ }\textbf {\bibinfo {volume} {95}},\ \bibinfo {pages} {189} (\bibinfo {year} {2003})}\BibitemShut {NoStop}%
\bibitem [{\citenamefont {Sturm}(1999)}]{sedumi}%
  \BibitemOpen
  \bibfield  {author} {\bibinfo {author} {\bibfnamefont {J.~F.}\ \bibnamefont {Sturm}},\ }\bibfield  {title} {\bibinfo {title} {Using sedumi 1.02, a matlab toolbox for optimization over symmetric cones},\ }\href {https://doi.org/10.1080/10556789908805766} {\bibfield  {journal} {\bibinfo  {journal} {Optim. Methods Software}\ }\textbf {\bibinfo {volume} {11}},\ \bibinfo {pages} {625} (\bibinfo {year} {1999})}\BibitemShut {NoStop}%
\bibitem [{\citenamefont {{The Sage Developers}}(2022)}]{sagemath}%
  \BibitemOpen
  \bibfield  {author} {\bibinfo {author} {\bibnamefont {{The Sage Developers}}},\ }\href@noop {} {\emph {\bibinfo {title} {{S}ageMath, the {S}age {M}athematics {S}oftware {S}ystem ({V}ersion 9.7)}}} (\bibinfo {year} {2022}),\ \bibinfo {note} {\url{https://www.sagemath.org}}\BibitemShut {NoStop}%
\bibitem [{\citenamefont {Brandão}\ \emph {et~al.}(2019)\citenamefont {Brandão}, \citenamefont {Crosson}, \citenamefont {Şahinoğlu},\ and\ \citenamefont {Bowen}}]{Brand_o_2019}%
  \BibitemOpen
  \bibfield  {author} {\bibinfo {author} {\bibfnamefont {F.~G. S.~L.}\ \bibnamefont {Brandão}}, \bibinfo {author} {\bibfnamefont {E.}~\bibnamefont {Crosson}}, \bibinfo {author} {\bibfnamefont {M.~B.}\ \bibnamefont {Şahinoğlu}},\ and\ \bibinfo {author} {\bibfnamefont {J.}~\bibnamefont {Bowen}},\ }\bibfield  {title} {\bibinfo {title} {Quantum error correcting codes in eigenstates of translation-invariant spin chains},\ }\href {https://doi.org/10.1103/physrevlett.123.110502} {\bibfield  {journal} {\bibinfo  {journal} {Phys. Rev. Lett.}\ }\textbf {\bibinfo {volume} {123}},\ \bibinfo {pages} {110502} (\bibinfo {year} {2019})}\BibitemShut {NoStop}%
\bibitem [{\citenamefont {Yang}\ \emph {et~al.}(2022)\citenamefont {Yang}, \citenamefont {Mo}, \citenamefont {Renes}, \citenamefont {Chiribella},\ and\ \citenamefont {Woods}}]{Yang_2022}%
  \BibitemOpen
  \bibfield  {author} {\bibinfo {author} {\bibfnamefont {Y.}~\bibnamefont {Yang}}, \bibinfo {author} {\bibfnamefont {Y.}~\bibnamefont {Mo}}, \bibinfo {author} {\bibfnamefont {J.~M.}\ \bibnamefont {Renes}}, \bibinfo {author} {\bibfnamefont {G.}~\bibnamefont {Chiribella}},\ and\ \bibinfo {author} {\bibfnamefont {M.~P.}\ \bibnamefont {Woods}},\ }\bibfield  {title} {\bibinfo {title} {Optimal universal quantum error correction via bounded reference frames},\ }\href {https://doi.org/10.1103/physrevresearch.4.023107} {\bibfield  {journal} {\bibinfo  {journal} {Phys. Rev. Res.}\ }\textbf {\bibinfo {volume} {4}},\ \bibinfo {pages} {023107} (\bibinfo {year} {2022})}\BibitemShut {NoStop}%
\bibitem [{\citenamefont {Kong}\ and\ \citenamefont {Liu}(2022)}]{Kong_2022}%
  \BibitemOpen
  \bibfield  {author} {\bibinfo {author} {\bibfnamefont {L.}~\bibnamefont {Kong}}\ and\ \bibinfo {author} {\bibfnamefont {Z.-W.}\ \bibnamefont {Liu}},\ }\bibfield  {title} {\bibinfo {title} {Near-optimal covariant quantum error-correcting codes from random unitaries with symmetries},\ }\href {https://doi.org/10.1103/prxquantum.3.020314} {\bibfield  {journal} {\bibinfo  {journal} {PRX Quantum}\ }\textbf {\bibinfo {volume} {3}},\ \bibinfo {pages} {020314} (\bibinfo {year} {2022})}\BibitemShut {NoStop}%
\bibitem [{\citenamefont {Choi}(1975)}]{CHOI1975285}%
  \BibitemOpen
  \bibfield  {author} {\bibinfo {author} {\bibfnamefont {M.-D.}\ \bibnamefont {Choi}},\ }\bibfield  {title} {\bibinfo {title} {{Completely positive linear maps on complex matrices}},\ }\href {https://doi.org/10.1016/0024-3795(75)90075-0} {\bibfield  {journal} {\bibinfo  {journal} {Linear Algebra Appl.}\ }\textbf {\bibinfo {volume} {10}},\ \bibinfo {pages} {285} (\bibinfo {year} {1975})}\BibitemShut {NoStop}%
\bibitem [{\citenamefont {Jamio{\l}kowski}(1972)}]{JAMIOLKOWSKI1972275}%
  \BibitemOpen
  \bibfield  {author} {\bibinfo {author} {\bibfnamefont {A.}~\bibnamefont {Jamio{\l}kowski}},\ }\bibfield  {title} {\bibinfo {title} {{Linear transformations which preserve trace and positive semidefiniteness of operators}},\ }\href {https://doi.org/10.1016/0034-4877(72)90011-0} {\bibfield  {journal} {\bibinfo  {journal} {Rep. Math. Phys.}\ }\textbf {\bibinfo {volume} {3}},\ \bibinfo {pages} {275} (\bibinfo {year} {1972})}\BibitemShut {NoStop}%
\bibitem [{\citenamefont {James}(2006)}]{james2006representation}%
  \BibitemOpen
  \bibfield  {author} {\bibinfo {author} {\bibfnamefont {G.~D.}\ \bibnamefont {James}},\ }\href {https://doi.org/10.1007/BFb0067708} {\emph {\bibinfo {title} {The Representation Theory of the Symmetric Groups}}},\ Vol.\ \bibinfo {volume} {682}\ (\bibinfo  {publisher} {Springer},\ \bibinfo {year} {2006})\BibitemShut {NoStop}%
\bibitem [{\citenamefont {Goodman}\ and\ \citenamefont {Wallach}(2009)}]{goodman2009symmetry}%
  \BibitemOpen
  \bibfield  {author} {\bibinfo {author} {\bibfnamefont {R.}~\bibnamefont {Goodman}}\ and\ \bibinfo {author} {\bibfnamefont {N.~R.}\ \bibnamefont {Wallach}},\ }\href {https://doi.org/10.1007/978-0-387-79852-3} {\emph {\bibinfo {title} {Symmetry, Representations, and Invariants}}},\ Vol.\ \bibinfo {volume} {255}\ (\bibinfo  {publisher} {Springer},\ \bibinfo {year} {2009})\BibitemShut {NoStop}%
\bibitem [{SDP()}]{SDPcode}%
  \BibitemOpen
  \href@noop {} {}\bibinfo {note} {\url{https://github.com/rn8128/Qubit-unitary-purification}}\BibitemShut {NoStop}%
\end{thebibliography}%
\clearpage

\appendix
\onecolumngrid
\section{Notation and Background}\label{ap:Background}
\subsection{Notation}
Here we clarify the big-$O$ notation $O(\cdot)$, $\Omega(\cdot)$, and $\Theta(\cdot)$, used in the main text. They are defined as follows: 
\begin{align}
    &f(x) = O(g(x)) \Leftrightarrow \limsup_{x \to \infty} \left| \frac{f(x)}{g(x)} \right| < \infty\\
    &f(x) = \Omega(g(x)) \Leftrightarrow g(x) = O(f(x))\\
    &f(x) = \Theta(g(x)) \Leftrightarrow f(x) = O(g(x)), \textrm{and } f(x) = \Omega(g(x)). 
\end{align}
We also use $O_n(\cdot)$, which means 
\begin{align}
    f_n(x) = O_n(g(x))  \Leftrightarrow \limsup_{x \to 0} \left| \frac{f_n(x)}{g(x)} \right| < \infty
\end{align}
for every fixed $n$. 

\subsection{Choi--Jamio\l{}kowski isomorphism for quantum channels and superchannels }
The Choi--Jamio\l{}kowski isomorphism~\cite{CHOI1975285, JAMIOLKOWSKI1972275} is a convenient way to represent quantum channels and superchannels. Given a quantum channel $\Phi: \mathcal{L}(\mathcal{I}) \to \mathcal{L}(\mathcal{O})$, its Choi matrix is defined by 
\begin{align}
  J_\Phi:=\sum_{i,j=1}^{d_{\mathcal{I}}} \ketbra{i}{j}_{\mathcal{I}} \otimes \Phi(\ketbra{i}{j})_{\mathcal{O}} \in \mathcal{L}(\mathcal{I} \otimes \mathcal{O}), 
\end{align}
satisfying the condition $J_\Phi \geq 0, \Tr_\mathcal{O}(J_\Phi) = \mathbb{I}_\mathcal{I}$. Under the Choi--Jamio\l{}kowski isomorphism, the composition of quantum channels $\Phi_1: \mathcal{L}(\mathcal{X}) \to \mathcal{L}(\mathcal{Y})$ and $\Phi_2: \mathcal{L}(\mathcal{Y}) \to \mathcal{L}(\mathcal{Z})$ is represented by the link product~\cite{chiribella2008quantum} defined by
\begin{align}
  J_{\Phi_2 \circ \Phi_1}
  &= J_{\Phi_2} \star J_{\Phi_1}\\
  &\coloneqq \Tr_{\mathcal{Y}}\left[
    (\mathbb{I}_{\mathcal{X}} \otimes J_{\Phi_2}) (J_{\Phi_1}^{\mathsf{T}_{\mathcal{Y}}} \otimes \mathbb{I}_{\mathcal{Z}})\right],
\end{align}
where $\star$ denotes the link product and $(\cdot)^{\mathsf{T}_{\mathcal{Y}}}$ denotes the partial transpose with respect to the subsystem $\mathcal{Y}$. 

Analogously, the Choi matrix of a sequential superchannel $\Xi$ is defined by 
\begin{align}
    \mathcal{J}_\Xi = \mathcal{J}_{\Lambda_{n+1}} \star \mathcal{J}_{\Lambda_{n}} \star \cdots \mathcal{J}_{\Lambda_1}, 
\end{align}
where $\Lambda_i$ represent quantum channels appearing in the decomposition~\eqref{eq:Seqdecomp}. The Choi matrix of the output channel $\Phi_{\textrm{out}} = \Xi(\Phi_1, \Phi_2 \cdots \Phi_n)$ is then given by 
\begin{align}
    \mathcal{J}_{\Phi_\textrm{out}} = \mathcal{J}_\Xi \star \qty[\bigotimes_{i=1}^n \mathcal{J}_{\Phi_i}]. 
\end{align}
The Choi matrix of a sequential superchannel $\Xi$ is fully characterized by the conditions
\begin{align}\label{eq:seqChoi}
\begin{dcases}
\mathcal{J}_\Xi \geq 0 \\
\Tr_{I_k, O_k, \cdots I_{n+1}} (\mathcal{J}_\Xi) = \Tr_{O_{k-1}, I_k, O_k, \cdots I_{n+1}} (\mathcal{J}_\Xi)  \otimes \frac{\mathbb{I}_{O_{k-1}}}{d_{O_{k-1}}}\\
\Tr(\mathcal{J}_\Xi) = d_Pd_\mathbf{O}. 
\end{dcases}
\end{align}
for all $k\in[n+1]$. For parallel superchannels, this condition relaxes to: 
\begin{align}\label{eq:parChoi}
\begin{dcases}
\mathcal{J}_\Xi \geq 0 \\
\tr_F(\mathcal{J}_\Xi) = \tr_{\mathbf{O}F}(\mathcal{J}_\Xi) \otimes \frac{\mathbb{I}_\mathbf{O}}{d_{\mathbf{O}}} \\
\tr_{\mathbf{I}\mathbf{O}F}(\mathcal{J}_\Xi) = \tr_{P\mathbf{I}\mathbf{O}F}(\mathcal{J}_\Xi)\otimes \frac{\mathbb{I}_P}{d_{P}} \\
\Tr(\mathcal{J}_\Xi) = d_Pd_{\mathbf{O}}. 
\end{dcases}
\end{align}

\subsection{Schur-Weyl duality}
Consider the following representation of the unitary group $U(d)$ and the symmetric group $\mathfrak{S}_n$,  
\begin{align}
    U^{\otimes n}|i_1\rangle \otimes  \cdots \otimes |i_n\rangle &= U|i_1\rangle \otimes \cdots \otimes U|i_n\rangle\\
    \pi(\sigma) |i_1\rangle \otimes  \cdots \otimes |i_n\rangle &= |i_{\sigma^{-1}(1)} \rangle \otimes \cdots \otimes |i_{\sigma^{-1}(n)} \rangle, 
\end{align}
for $U \in U(d)$ and $\sigma \in \mathfrak{S}_n$. The Schur-Weyl duality is the following decomposition of the Hilbert space: 
\begin{align}
    (\mathbb{C}^{d})^{\otimes n} &\simeq \bigoplus_{\substack{\lambda \vdash n \\\ell(\lambda)\leq d}} \mathcal{U}_\lambda \otimes \mathcal{S}_\lambda \\
    U^{\otimes n} &\simeq \bigoplus_{\substack{\lambda \vdash n \\\ell(\lambda)\leq d}} f_\lambda(U) \otimes \mathbb{I}_{\mathcal{S}_\lambda}\\
    \pi(\sigma) &\simeq \bigoplus_{\substack{\lambda \vdash n \\\ell(\lambda)\leq d}} \mathbb{I}_{\mathcal{U}_\lambda} \otimes g_\lambda(\sigma),
\end{align}
where $\lambda \vdash n$ denotes a partition $(\lambda_1, \lambda_2, \cdots \lambda_{\ell(\lambda)})$ satisfying
$\lambda_1+ \lambda_2+ \cdots \lambda_{\ell(\lambda)}=n$, $\lambda_1 \geq \lambda_2 \geq \cdots \geq \lambda_{\ell(\lambda)}$ and $\ell(\lambda) \leq d$, while $f_\lambda: U(d) \to \mathcal{L}(\mathcal{U}_\lambda)$, $g_\lambda: \mathfrak{S}_n \to \mathcal{L}(\mathcal{S}_\lambda)$ denotes the irreducible representation of $U(d)$ and $\mathfrak{S}_n$ labeled by the partition $\lambda$, respectively. $\mathcal{L}(\mathcal{H})$ denotes the set of linear operators on $\mathcal{H}$. We define the dimensions $\dim \mathcal{U}_\lambda := d_\lambda, \dim \mathcal{S}_\lambda := m_\lambda$ and employ the Gelfand-Tsetlin basis $\{|u\rangle\}_{u\in [d_\lambda]}$~\cite{james2006representation} as an orthonormal basis for $\mathcal{U}_\lambda$ and the Young-Yamanouchi basis $\{|i\rangle\}_{i\in [m_\lambda]}$~\cite{goodman2009symmetry} for an orthonormal basis for $\mathcal{S}_\lambda$. We call the basis $\{|\lambda, u, i \rangle\}_{\lambda \vdash n, \ell(\lambda) \leq d}$ the Schur basis. The unitary transformation taking the computational basis to the Schur basis $U_{\textrm{Sch}}^{(n,d)}$ is called the quantum Schur transform. The Young-Yamanouchi basis is associated to the Standard Young Tableau (SYT). The matrix unit is defined by 
\begin{align}
    E^\lambda_{ij}:= \mathbb{I}_{d_\lambda} \otimes |i\rangle \langle j|
\end{align}
enjoys the following convenient properties
\begin{align}\label{eq:YYformula}
\begin{dcases}
    E^{\alpha}_{ab} \otimes \mathbb{I}_d = \sum_{\mu \in \alpha + \Box} E^{\mu}_{a_\mu b_\mu}\\
    \Tr_{n+1} E_{ij}^\mu  = \delta_{\alpha \beta} \frac{d_\mu}{d_\alpha} E^{\alpha}_{a_\mu b_\mu}
\end{dcases}
\end{align}
where $\alpha+\Box$ is the set of Young diagrams obtained by adding a box to $\alpha$, $\alpha_\mu$ is the SYT obtained by adding the box $n+1$ to the SYT $a$, and $i,j$ are SYT obtained from $a_\mu, b_\mu$ by removing the box $n+1$. In the second equation, the partial trace vanishes if the two tableaux reduce to different shapes after removing $n+1$. 

\section{Rewriting the SDP}\label{ap:SDP}
As discussed in the main text, we consider the performance operator 
\begin{align}
     \widetilde{\Omega}_{n,p} := \frac{1}{d^2}\int dU |U\rangle \rangle \langle \langle U| \otimes \mathcal{J}_{\mathcal{D}_p \circ \mathcal{U}}^{\otimes n}.
\end{align}
For $p=0$, we have 
\begin{align}\label{eq:noerror}
    \widetilde{\Omega}_{k,0} = \frac{1}{d^2} \sum_{\lambda \vdash k+1} \frac{1}{d_{\lambda}}\mathbb{I}_{d_\lambda} \otimes \mathbb{I}_{d_\lambda}  \otimes \sum_{i,j=1}^{m_\lambda} |i\rangle \langle j| \otimes |i\rangle \langle j|. 
\end{align}
For $p>0$, we have 
\begin{align}
    \widetilde{\Omega}_{n,p} = \sum_{\mu, \nu \vdash N} \mathbb{I}_{d_\mu} \otimes \mathbb{I}_{d_\nu} \otimes (\widetilde{\Omega}_{n,p})_{\mu \nu}
\end{align}
due to the unitary symmetry $[\widetilde{\Omega}_{n,p}, V^{\otimes N} \otimes W^{\otimes N}] =0, \, (V,W\in \mathrm{SU}(d))$. Each Schur block of $\widetilde{\Omega}_{n,p}$ can be recursively constructed from $\widetilde{\Omega}_{k,0} \, (1\leq k<n)$ using Eq.~\eqref{eq:noerror} and the branching rule Eq.~\eqref{eq:YYformula}. Now, the SDP of interest for the sequential and parallel case is 
\begin{align}\textbf{\label{eq:SeqChoiap}}
&\max_{\mathcal{J}_\Xi} \Tr(\widetilde{\Omega}^T_{n,p} \mathcal{J}_\Xi) \nonumber\\
&\textrm{subject to}
\begin{dcases}
\mathcal{J}_\Xi \geq 0 \\
\Tr_{I_k, O_k, \cdots I_{n+1}} (\mathcal{J}_\Xi) = \Tr_{O_{k-1}, I_k, O_k, \cdots I_{n+1}} (\mathcal{J}_\Xi)  \otimes \frac{\mathbb{I}_{O_{k-1}}}{d_{O_{k-1}}}\\
\Tr(\mathcal{J}_\Xi) = d_Pd_\mathbf{O}, 
\end{dcases}
\end{align}
and 
\begin{align}\label{eq:ParChoiap}
&\max_{\mathcal{J}_\Xi} \Tr(\widetilde{\Omega}^T_{n,p} \mathcal{J}_\Xi) \nonumber\\
&\textrm{subject to}
\begin{dcases}
\mathcal{J}_\Xi \geq 0 \\
\tr_F(\mathcal{J}_\Xi) = \tr_{\mathbf{O}F}(\mathcal{J}_\Xi) \otimes \frac{\mathbb{I}_\mathbf{O}}{d_{\mathbf{O}}} \\
\tr_{\mathbf{I}\mathbf{O}F}(\mathcal{J}_\Xi) = \tr_{P\mathbf{I}\mathbf{O}F}(\mathcal{J}_\Xi)\otimes \frac{\mathbb{I}_P}{d_{P}} \\
\Tr(\mathcal{J}_\Xi) = d_Pd_{\mathbf{O}}, 
\end{dcases}
\end{align}
respectively. As discussed in the main text, we have 
\begin{align}
\begin{dcases}
    \mathcal{J}_\Xi = \sum_{\mu, \nu \vdash N} \mathbb{I}_{d_\mu} \otimes \mathbb{I}_{d_\nu}  \otimes C_{\mu \nu}\\
     C_{\mu \nu} = \sum_{ijkl}c_{ijkl}^{\mu \nu}|i\rangle \langle j| \otimes |k\rangle \langle l|,
\end{dcases}
\end{align}
due to the unitary symmetry. Here, $C_{\mu \nu} \in \mathrm{End}(\mathcal{S}_\mu \otimes \mathcal{S}_\nu)$ and the systems are ordered $\mathbf{I}P|\mathbf{O}F$. We thus have 
\begin{align}
    \Tr(\widetilde{\Omega}_{n,p} \mathcal{J}_\Xi) = \sum_{\mu, \nu \vdash N} d_\mu d_\nu \Tr(\widetilde{\Omega}_{\mu \nu} C_{\mu \nu})
\end{align}
and $\mathcal{J}_\Xi \geq 0 \Leftrightarrow C_{\mu \nu} \geq 0$ for all $\mu, \nu \vdash N$. Let us discuss the reduction of the remaining trace conditions of ~\eqref{eq:SeqChoiap}, \eqref{eq:ParChoiap} in the Schur basis. \\

\noindent \underline{Sequential comb condition}: The trace conditions in ~\eqref{eq:SeqChoiap} are equivalent to: 
\begin{align}
    \Tr_F(R_{n+1}) &= R_n \otimes \mathbb{I}_{O_n} \label{eq:Seqtr1}\\
    \Tr_{I_t}(R_t) &= R_{t-1} \otimes \mathbb{I}_{O_{t-1}} \quad (t=n, n-1, \cdots 2) \label{eq:Seqtr2}\\
    \Tr_{I_1}(R_1) &= \mathbb{I}_P, \label{eq:Seqtr3}
\end{align}
where $R_{n+1} = \mathcal{J}_\Xi, R_t = \frac{1}{d^{n+1-t}}\Tr_{O_t, I_{t+1} O_{t+1} \cdots I_nO_nF}(\mathcal{J}_\Xi) \: (t=n,n-1 \cdots 1)$. Let us define the branching map $X_{\alpha \nu}:S_\nu \to S_\alpha$, whose matrix elements are given by 
\begin{align}
    [X_{\alpha \nu}]_{ca} = \delta_{c_\nu a}, 
\end{align}
where $\nu \in \alpha+\Box$, $c$ denotes a SYT with frame $\alpha$, $a$ denotes a SYT with frame $\nu$, and $c_\nu$ denotes the SYT with frame $\nu$ obtained by adding a box to the SYT $c$ with frame $\alpha$. Then, the conditions~\eqref{eq:Seqtr1},\eqref{eq:Seqtr2},\eqref{eq:Seqtr3} are equivalent to: 
\begin{align}
\begin{dcases}
    \sum_{\nu \in \alpha+\Box} \frac{d_\nu}{d_\alpha}(\mathbb{I}_\mu\otimes X_{\alpha \nu})C_{\mu \nu} (\mathbb{I}_\mu\otimes X_{\alpha \nu})^\dagger =  \sum_{\beta \in \alpha-\Box} (\mathbb{I}_\mu\otimes X_{\beta \alpha}^\dagger)R_{n, \mu \beta} (\mathbb{I}_\mu\otimes X_{\beta \alpha}^\dagger)^\dagger\\
    \sum_{\mu \in \lambda+\Box} \frac{d_\mu}{d_\lambda}(X_{\lambda \mu} g_\mu(\sigma_t)\otimes \mathbb{I}_{m_\beta})R_{t, \mu \beta} (X_{\lambda \mu} g_\mu(\sigma_t)\otimes \mathbb{I}_{m_\beta})^\dagger  = \sum_{\gamma \in \beta-\Box} (\mathbb{I}_{m_\lambda}\otimes X_{\gamma \beta}^\dagger)R_{t-1, \lambda \gamma} (\mathbb{I}_{m_\lambda}\otimes X_{\gamma \beta}^\dagger)^\dagger\\
    \sum_{\mu \in \lambda+\Box} \frac{d_\mu}{d_\lambda}[X_{\lambda \mu} g_\mu(\sigma_1)\otimes \mathbb{I}_{m_\Box}]R_{1, \mu \Box} [X_{\lambda \mu} g_\mu(\sigma_1)\otimes \mathbb{I}_{m_\Box}]^\dagger = \mathbb{I}_{m_\Box}, 
\end{dcases}
\end{align}
respectively. Here, $\sigma_t$ denotes the representation matrix of the permutation $(t, t+1)$. \\

\noindent \underline{Parallel comb condition}: The trace conditions in~\eqref{eq:ParChoiap} are equivalent to: 
\begin{align}
    \sum_{\nu \in \alpha+\Box} \frac{d_\nu}{d_\alpha}(\mathbb{I}_\mu\otimes X_{\alpha \nu})C_{\mu \nu} (\mathbb{I}_\mu\otimes X_{\alpha \nu})^\dagger = \frac{1}{d^n}\sum_{\nu \vdash N} d_\nu \Tr_{S_\nu}[C_{\mu \nu}]\otimes \mathbb{I}_{m_\alpha}. 
\end{align}
The MATLAB code used to obtain the numerical values is available at~\cite{SDPcode}.

\section{Proof of Eq.~\eqref{eq:compchannelelem}--~\eqref{eq:fidelitylowerbound}}\label{ap:lowerbound}
Here we provide a detailed proof of the equations in the lower bound proof. We begin with the following lemma:
\begin{lm}
\begin{align}\label{eq:sigmatau}
    V^\dagger \sigma_{a,r} V  = \frac{1}{n} \tau_{a}. 
\end{align}
\begin{align}\label{eq:sigmasigmatau}
    V^\dagger \sigma_{a,r} \sigma_{b,s} V = 
    \begin{dcases}
        \delta_{ab} + \frac{i}{n} \epsilon_{abc} \tau_{c} \quad (r=s)\\
        -\frac{1}{n} \delta_{ab} \quad(r\neq s)
    \end{dcases}
\end{align}
\end{lm}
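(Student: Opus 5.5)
The plan is to reduce both identities to statements about collective spin operators. The tool is the permutation invariance of the compressed map $X \mapsto V^\dagger X V$, where $V = V_n$ is the encoding from the proof of Thm.~\ref{thm1} and $X$ acts on the $n$ physical qubits, tensored with $\mathbb{I}_E$.

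\textbf{Step 1 (compression formula).} Writing $X' := U_{\mathrm{Sch}} X U_{\mathrm{Sch}}^\dagger$, I will first show
\begin{align}
V^\dagger X V = \frac{1}{m_{\lambda_0}} \sum_{i=1}^{m_{\lambda_0}} \langle \lambda_0, i | X' | \lambda_0, i\rangle = \frac{1}{m_{\lambda_0}} \Tr_{\mathcal{S}_{\lambda_0}}\qty[\Pi_{\lambda_0} X' \Pi_{\lambda_0}],
\end{align}
regarded as an operator on $\mathcal{U}_{\lambda_0} \cong \mathbb{C}^2$. This follows because the $E$ register is maximally entangled with $\mathcal{S}_{\lambda_0}$, so the inner product $\langle i|j\rangle_E = \delta_{ij}$ collapses the double sum to a diagonal sum.

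\textbf{Step 2 (permutation invariance).} For any $\sigma \in \mathfrak{S}_n$, Schur--Weyl duality gives $\pi(\sigma) \simeq \bigoplus_\lambda \mathbb{I}_{\mathcal{U}_\lambda} \otimes g_\lambda(\sigma)$. This operator commutes with $\Pi_{\lambda_0}$ and acts only on the Specht factor inside the block. Cyclicity of the partial trace over $\mathcal{S}_{\lambda_0}$ then gives
\begin{align}
V^\dagger \pi(\sigma) X \pi(\sigma)^\dagger V = V^\dagger X V .
\end{align}
Since $\mathfrak{S}_n$ acts transitively on sites and on ordered pairs of distinct sites, we obtain:
\begin{itemize}
\item $V^\dagger \sigma_{a,r} V = \frac{1}{n} V^\dagger S_a V$, where $S_a := \sum_r \sigma_{a,r}$;
\item for $r \neq s$, $V^\dagger \sigma_{a,r}\sigma_{b,s} V = \frac{1}{n(n-1)} V^\dagger \sum_{r'\neq s'} \sigma_{a,r'}\sigma_{b,s'} V$.
\end{itemize}

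\textbf{Step 3 (collective spin on the block).} The operator $S_a/2$ is the generator of $U^{\otimes n}$, so in the Schur basis it equals $\bigoplus_\lambda \dd f_\lambda(\sigma_a/2) \otimes \mathbb{I}_{\mathcal{S}_\lambda}$. For $\lambda_0 = (k,k-1)$ the $\mathrm{SU}(2)$ irrep has spin $(\lambda_1-\lambda_2)/2 = 1/2$. Hence $S_a$ restricted to the $\lambda_0$ block is exactly $\tau_a \otimes \mathbb{I}_{\mathcal{S}_{\lambda_0}}$, with $\tau_a$ the logical Pauli operators, after fixing the Gelfand--Tsetlin basis so that they match. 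Step 1 then gives $V^\dagger S_a V = \tau_a$, which proves Eq.~\eqref{eq:sigmatau}.

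Because $S_a$ is block diagonal, products compress multiplicatively, so $V^\dagger S_a S_b V = \tau_a \tau_b = \delta_{ab} + i\epsilon_{abc}\tau_c$.

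\textbf{Step 4 (the two cases of Eq.~\eqref{eq:sigmasigmatau}).}

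For $r=s$, use the single-qubit identity $\sigma_a\sigma_b = \delta_{ab} + i\epsilon_{abc}\sigma_c$ together with Eq.~\eqref{eq:sigmatau} and $V^\dagger V = \mathbb{I}_L$. This gives $\delta_{ab} + \frac{i}{n}\epsilon_{abc}\tau_c$.

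For $r\neq s$, write
\begin{align}
\sum_{r'\neq s'} \sigma_{a,r'}\sigma_{b,s'} = S_aS_b - \sum_{r'}\sigma_{a,r'}\sigma_{b,r'} = S_a S_b - n\delta_{ab} - i\epsilon_{abc}S_c .
\end{align}
Compressing yields
\begin{align}
\delta_{ab} + i\epsilon_{abc}\tau_c - n\delta_{ab} - i\epsilon_{abc}\tau_c = (1-n)\delta_{ab}.
\end{align}
Dividing by $n(n-1)$ gives $-\delta_{ab}/n$, as claimed.

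\textbf{Main obstacle.} The only delicate point is Step 2: one must justify that $V^\dagger(\cdot)V$ is invariant under conjugation by $\pi(\sigma)$ even though individual Specht states are not. This rests on the maximal entanglement with $E$, which turns the compression into a normalized partial trace over the Specht module.
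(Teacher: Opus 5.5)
Your proof is correct and follows essentially the same route as the paper. You write the compression as a normalized partial trace over $\mathcal{S}_{\lambda_0}$ to get permutation invariance, identify $V^\dagger S_a V=\tau_a$, and split $S_aS_b$ into its same-site and distinct-site parts; the only differences are that you obtain $V^\dagger S_a V=\tau_a$ from the spin-$1/2$ Schur--Weyl block where the paper differentiates $(e^{i\sigma_a t})^{\otimes n}V=Ve^{i\tau_a t}$, and you explicitly justify the permutation invariance (cyclicity of the partial trace over the Specht factor), which the paper only asserts.
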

\begin{proof}
A useful starting point is 
\begin{align}
    V_n^\dagger \mathcal{O}  V_n = \frac{1}{m_{\lambda_0}} \Tr_{\mathcal{S}_{\lambda_0}}(\Pi_{\lambda_0}\mathcal{O} \Pi_{\lambda_0})
\end{align}
for any observable $\mathcal{O}\in (\mathbb{C}^2)^{\otimes n}$, where $\Pi_{\lambda_0}$ is the Young projector onto the irrep $\lambda_0$. This implies the permutation symmetry within the codespace, such as 
\begin{align}\label{eq:permsym1}
V^\dagger \sigma_{a,1} V = V^\dagger \sigma_{a,2} V = \cdots V^\dagger \sigma_{a,n} V. 
\end{align}
Next, let us define
\begin{align}
    S_a := \sum_{r=1}^n \sigma_{a,r}. 
\end{align}
$S_a$ in fact acts as a logical Pauli operator. This is because the $\mathrm{SU}(2)$ covariance of the code implies 
\begin{align}
    (e^{i\sigma_{a,1}t} \otimes e^{i\sigma_{a,2}t}  \otimes \cdots \otimes e^{i\sigma_{a,n}t})V = Ve^{i\tau_{a} t}, 
\end{align}
where $\tau_a$ denotes the logical Pauli operator. Differentiating at $t=0$, one obtains 
\begin{align}
    V^\dagger S_a V = \tau_{a} 
\end{align}
Together with the permutation symmetry Eq.~\eqref{eq:permsym1}, we conclude Eq.~\eqref{eq:sigmatau}. Next, let us consider the term $V^\dagger \sigma_{a,r} \sigma_{b,s} V$. For $r=s$, we have 
\begin{align}
    V^\dagger \sigma_{a,r} \sigma_{b,r} V = V^\dagger (\delta_{ab} + i\epsilon_{abc} \sigma_{c,r}) V =  \delta_{ab} + \frac{i}{n} \epsilon_{abc} \tau_{c}. 
\end{align}
For $r\neq s$, we have 
\begin{align}
    V^\dagger \qty(\sum_{r}\sum_{s}\sigma_{a,r} \sigma_{b,s})V = V^\dagger S_a S_b V = \tau_a \tau_b = \delta_{ab}+i\epsilon_{abc}\tau_c. 
\end{align}
Since 
\begin{align}
    V^\dagger \qty(\sum_{r}\sum_{s}\sigma_{a,r} \sigma_{b,s}) V= (n\delta_{ab} + i\epsilon_{abc}\tau_c) +  V^\dagger \qty(\sum_{r\neq s}\sigma_{a,r} \sigma_{b,s}) V,
\end{align}
we have the expression
\begin{align}
    V^\dagger \qty(\sum_{r\neq s}\sigma_{a,r} \sigma_{b,s}) V = -(n-1)\delta_{ab}. 
\end{align}
Due to the permutation symmetry, one has $V^\dagger \sigma_{a,r} \sigma_{b,s} V = V^\dagger \sigma_{a,1} \sigma_{b,2} V$, so 
\begin{align}
    V^\dagger \sigma_{a,r} \sigma_{b,s} V = -\frac{1}{n} \delta_{ab}. 
\end{align}
\end{proof}
\noindent The above lemma gives the matrix elements 
\begin{align}\label{eq:A}
\begin{dcases}
    A_{00} = q \\
    A_{(a,r), 0} = 0 \\
    A_{0, (b,s)} = 0 \\
    A_{(a,r), (b,s)} = 
    \begin{dcases}
        \epsilon_p \delta_{ab} \quad (r=s) \\
        -\frac{\epsilon_p}{n}\delta_{ab} \quad (r\neq s)
    \end{dcases}
\end{dcases}
\end{align}
\begin{align}
\begin{dcases}
    \delta A_{00}^c = 0\\
    \delta A_{(a,r), 0}^c = \frac{\sqrt{q\epsilon_p} }{n} \delta_{ca} \\
    \delta A_{0, (b,s)}^c =  \frac{\sqrt{q\epsilon_p} }{n} \delta_{bc}\\
    \delta A_{(a,r), (b,s)}^c = 
    \begin{dcases}
        -\frac{i\epsilon_p}{n} \epsilon_{abd} \delta_{cd} \quad (r=s) \\
        0 \quad (r\neq s). 
    \end{dcases}
\end{dcases}
\end{align}
Now, let $q:= \qty(1-\frac{3}{4}p)^n, \epsilon_p := \frac{p}{4}\qty(1-\frac{3}{4}p)^{n-1}$. We have 
\begin{align}
    A = q|0\rangle \langle 0| + \mathbb{I}_3 \otimes \qty(\epsilon_p \qty(1+\frac{1}{n}) \mathbb{I}_{n}-\epsilon_p|u\rangle\langle u|), 
\end{align}
where $|u\rangle := \frac{1}{\sqrt{n}} \sum_{r=1}^n |r\rangle$. Thus, we have 
\begin{align}
    A = q|0\rangle \langle 0| + \qty(\sum_{a=x,y,z} |a\rangle \langle a|) \otimes \qty(\frac{\epsilon_p}{n} |u\rangle \langle u| + \epsilon_p \qty(1+\frac{1}{n})\sum_{v \in u^\perp} |v\rangle \langle v|). 
\end{align}
We also have 
\begin{align}
    \frac{1}{2}\sum_c(\delta A)^c \otimes \tau_c^T &=\frac{1}{2}\sqrt{\frac{\epsilon_p q}{n}} \sum_c (|c,u\rangle\langle 0| + |0\rangle \langle c,u|) \otimes \tau_c^T\nonumber\\
    &-\frac{\epsilon_p}{2n}\sum_{a,b,c}i\epsilon_{abc} |a\rangle \langle b| \otimes (|u\rangle \langle u|+\Pi_{u^\perp}) \otimes \tau_c^T. 
\end{align}
Now, we define 
\begin{align}
    W_u := \frac{1}{\sqrt{3}} \sum_{a} |a,u\rangle \otimes \tau_a^T, \quad U_0 := |0\rangle \otimes \mathbb{I}_L
\end{align}
Then, 
\begin{align}
    W_uW_u^\dagger &= \frac{1}{3} \sum_{a,b} |a\rangle \langle b| \otimes |u\rangle \langle u| \otimes \tau_a^T\tau_b^T = \frac{1}{3} \qty(\mathbb{I}_3 \otimes |u\rangle \langle u| \otimes \mathbb{I}_L+\sum_{a,b,c} (-i\epsilon_{abc}|a\rangle \langle b|) \otimes |u\rangle \langle u| \otimes \tau_c^T). 
\end{align}
Therefore, 
\begin{align}
    A \otimes \frac{\mathbb{I}_L}{2} + \frac{1}{2}\sum_c(\delta A)^c \otimes \tau_c^T &= \left[\frac{q}{2} U_0U_0^\dagger + \frac{1}{2}\sqrt{\frac{3q\epsilon_p}{n}} (W_uU_0^\dagger + U_0W_u^\dagger)+ \frac{3\epsilon_p}{2n}W_uW_u^\dagger \right] \nonumber\\
    &+\frac{\epsilon_p}{2}\qty(\frac{n+1}{n}) (\mathbb{I}_3 \otimes \Pi_{u^\perp} \otimes \mathbb{I}_L -\frac{1}{n+1}(\mathbb{I}_3 \otimes \Pi_{u^\perp} \otimes  \mathbb{I}_L-3\sum_v W_vW_v^\dagger)), 
\end{align}
with $\Pi_{u^\perp}:= \sum_{v} |v\rangle \langle v|$. Now, let us choose the fixed state to be 
\begin{align}
    \rho_E =\frac{q}{q+\frac{(n-1)\epsilon_p s_n^2}{3}}|0\rangle \langle 0| + \frac{s_n^2 \epsilon_p}{9\qty(q+\frac{(n-1)\epsilon_ps_n^2}{3})} \mathbb{I}_3 \otimes \Pi_{u^\perp}, 
\end{align}
where $s_n := 2+ \sqrt{\frac{n+3}{n}}$. This is a valid quantum state since $\rho_E \geq 0, \Tr(\rho_E) = 1$. The contribution to the square root fidelity $f_{\mathrm{root}} = \sqrt{f} = \Tr\sqrt{\rho^{\frac{1}{2}}\sigma\rho^{\frac{1}{2}}}$ from the $\mathrm{span} \{ |0\rangle, |u,a\rangle \}$ sector is 
\begin{align}
    2f_{\mathrm{root}}\qty(\mqty(\frac{q}{2} & \frac{1}{2}\sqrt{\frac{3q\epsilon_p}{n}}\\
     \frac{1}{2}\sqrt{\frac{3q\epsilon_p}{n}} & \frac{3\epsilon_p}{2n}), \mqty(\frac{q}{2\qty(q+\frac{(n-1)\epsilon_p s_n^2}{3})} & 0\\
    0 & 0)) = \frac{q}{\sqrt{\qty(q+\frac{(n-1)\epsilon_p s_n^2}{3})}}
\end{align}
and the contribution to the square root fidelity from the $\mathrm{span}\{|v\rangle \}, \, (v \in u^{\perp})$ sector is  
\begin{align}
    (n-1) \times \frac{\frac{ \sqrt{\epsilon_p}s_n}{3}}{\sqrt{2\qty(q+\frac{(n-1)\epsilon_p s_n^2}{3})}}\qty(2\sqrt{\frac{\epsilon_p}{2}\qty(1+\frac{3}{n})}+4\sqrt{\frac{\epsilon_p}{2}}) = \frac{ (n-1) \times  \frac{ \epsilon_p s_n^2}{3}}{\sqrt{\qty(q+\frac{(n-1)\epsilon_p s_n^2}{3})}}. 
\end{align}
These facts together imply the fidelity lower bound 
\begin{align}
    f_{\textrm{av}} &\geq  q+\frac{(n-1)\epsilon_p s_n^2}{3} \nonumber\\
    &= \qty(1-\frac{3}{4}p)^n+\frac{(n-1)}{3} \qty(2+ \sqrt{\frac{n+3}{n}})^2 \frac{p}{4}\qty(1-\frac{3}{4}p)^{n-1}\nonumber\\
    &= 1-\qty(\frac{3}{4}n - \frac{n-1}{3}\qty(1 + \frac{1}{2}\sqrt{\frac{n+3}{n}})^2)p +O(n^2p^2)\nonumber \\
    &=  1- \frac{9p}{8n} + O\qty(\frac{p}{n^2}) + O(n^2p^2)
\end{align}

\section{Concatenation}\label{ap:concatenation}
\subsection{Validity of concatenation schemes}
\begin{lm}
Let $\mathcal{N} = \mathcal{D}_p \circ \mathcal{U}$, and let the output of a twirled purification  superchannel $\Xi$ be $\mathcal{P} = \Xi(\mathcal{N}^{\otimes n})$. Then, $\mathcal{P} = \mathcal{D}_q \circ \mathcal{U}$ for some real number $q$. 
\end{lm}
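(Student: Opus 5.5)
The plan is to use the covariance of the twirled superchannel to show that $\mathcal{Q}:=\mathcal{P}\circ\mathcal{U}^{-1}$ is a $\mathrm{U}(2)$-covariant qubit channel. Any such channel is automatically a depolarizing channel.

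First, recall what twirling buys us. The twirled Choi matrix satisfies Eq.~\eqref{eq:CO_unitary_symmetry}, which is the statement that for all $V,W\in \mathrm{U}(2)$ and all input channels $\Phi$,
\begin{align}
\Xi\big((\mathcal{W}\circ\Phi\circ\mathcal{V})^{\otimes n}\big)=\mathcal{W}\circ\Xi(\Phi^{\otimes n})\circ\mathcal{V}.
\end{align}
Here we write $\mathcal{V}(\cdot)=V(\cdot)V^\dagger$ and $\mathcal{W}(\cdot)=W(\cdot)W^\dagger$. To verify this, I would translate the commutation relation through the link product $\mathcal{J}_{\Phi_{\mathrm{out}}}=\mathcal{J}_\Xi\star\bigotimes_i\mathcal{J}_{\Phi_i}$. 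Conjugating the $\mathbf{I}P$ legs by $V$ and the $\mathbf{O}F$ legs by $W$, the link product moves the (transposed) unitaries from $\mathcal{J}_\Xi$ onto the input Choi matrices and onto the output ports. One must keep track of complex conjugation and transposes here. If the symmetry is stated for $\widetilde\Omega$ with the $(-i\sigma^y)$ rotation, the statement in terms of $V^*$ on $P$ is equivalent, since $V^*$ equals $V$ up to a phase and a fixed similarity. Phases cancel in channel conjugation, so this does not affect the argument.

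Second, apply this identity with $\Phi=\mathcal{N}=\mathcal{D}_p\circ\mathcal{U}$. Use the fact that $\mathcal{D}_p$ commutes with every unitary channel, and pick $V=U^\dagger GU$ and $W=G$ for an arbitrary $G\in\mathrm{U}(2)$. Then $\mathcal{W}\circ\mathcal{N}\circ\mathcal{V}=\mathcal{D}_p\circ\mathcal{G}\circ\mathcal{U}\circ\mathcal{U}^\dagger\circ\mathcal{G}^\dagger\circ\mathcal{U}=\mathcal{N}$, which gives $\mathcal{G}\circ\mathcal{P}\circ\mathcal{U}^\dagger\circ\mathcal{G}^\dagger\circ\mathcal{U}=\mathcal{P}$. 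Rearranging, $\mathcal{Q}=\mathcal{P}\circ\mathcal{U}^\dagger$ satisfies $\mathcal{G}\circ\mathcal{Q}\circ\mathcal{G}^\dagger=\mathcal{Q}$ for all $G$.

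Third, classify the covariant qubit channels. Because $\mathcal{Q}$ is unitarily covariant, its Choi matrix $J_\mathcal{Q}$ commutes with $G^*\otimes G$. It is therefore a linear combination of $\mathbb{I}\otimes\mathbb{I}$ and the unnormalized maximally entangled projector $|\Phi^+\rangle\langle\Phi^+|$; this is the Schur--Weyl / isotropic-state argument, the commutant being two-dimensional. Trace preservation then forces $\mathcal{Q}(\rho)=(1-q)\rho+q\,\Tr(\rho)\mathbb{I}/2$, that is, $\mathcal{Q}=\mathcal{D}_q$ for a real $q$. Complete positivity gives $q\in[0,4/3]$; $q$ need not lie in $[0,1]$, which is why the lemma only claims a real number. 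It follows that $\mathcal{P}=\mathcal{D}_q\circ\mathcal{U}$.

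The main obstacle is the first step: carefully deriving channel-level covariance from the Choi-level symmetry, with the correct placement of transposes and conjugates in the link product. Everything after that step is elementary.
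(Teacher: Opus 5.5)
Your argument is correct and is essentially the paper's: both use the two-sided covariance of the twirled $\Xi$ and the fact that $\mathcal{D}_p$ commutes with unitary channels to show that $\mathcal{P}\circ\mathcal{U}^\dagger$ commutes with every unitary channel, and hence is depolarizing. You justify that last step via the two-dimensional commutant of $G^*\otimes G$, which the paper merely asserts. The paper's packaging is slightly tighter: it sets $\mathcal{Q}:=\Xi(\mathcal{D}_p^{\otimes n})$ and applies the input-side and output-side covariances with $U$ separately, which also makes manifest that $q$ does not depend on $U$, as the upper-bound proof requires. Also, your choice should read $V=U^\dagger G^\dagger U$ to match the composition $\mathcal{U}^\dagger\circ\mathcal{G}^\dagger\circ\mathcal{U}$ you actually wrote; this slip is harmless since $G$ is arbitrary.
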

\begin{proof}
Let $\Xi$ denote the twirled superchannel with the symmetry
\begin{align}
    [V_P^*\otimes V^{\otimes n}_{\mathbf{I}}\otimes W_F^*\otimes W_{\mathbf{O}}^{\otimes n}, \mathcal{J}_\Xi] = 0. 
\end{align}
Let $\mathcal{Q}$ denote the channel 
\begin{align}
    \mathcal{Q} = \Xi(\mathcal{D}_p^{\otimes n}). 
\end{align}
Using the symmetry, we can show that 
\begin{align}
    \Xi([\mathcal{D}_p\circ \mathcal{U}]^{\otimes n}) &= \mathcal{U}\circ \mathcal{U^{\dagger}}\circ \Xi([\mathcal{D}_p\circ \mathcal{U}]^{\otimes n})\nonumber\\
    &= \mathcal{U} \circ \mathcal{Q}. 
\end{align}
We can also show that 
\begin{align}
    \Xi([\mathcal{D}_p\circ \mathcal{U}]^{\otimes n}) &= \Xi([\mathcal{D}_p\circ \mathcal{U}]^{\otimes n}) \circ \mathcal{U}^\dagger \circ \mathcal{U}\nonumber\\
    &= \mathcal{Q} \circ \mathcal{U}. 
\end{align}
Therefore, $\mathcal{Q}$ commutes with an arbitrary unitary $U$, which implies that it is a depolarizing channel. 
\end{proof}

\subsection{Performance of concatenation schemes}
Suppose that a $k$-slot purification superchannel satisfies
\begin{align}
    f_k(p) = 1-C_k^{(1)}p+O_k(p^2).
\end{align}
Then, the effective depolarizing parameter after one layer is
\begin{align}
    p \mapsto r_k p+O_k(p^2),
    \qquad
    r_k := \frac{4C_k^{(1)}}{3}.
\end{align}
After $\ell$ concatenation layers, one has
\begin{align}
    p_\ell = r_k^\ell p+O_k(p^2),
    \qquad
    n=k^\ell.
\end{align}
Therefore,
\begin{align}
    p_n
    \sim
    r_k^{\log_k n}p 
    =
    n^{\log_k r_k}p 
    =
    n^{\log_k\qty(\frac{4C_k^{(1)}}{3})}p.
\end{align}
Hence the concatenated $k$-slot scheme suppresses the noise polynomially in $n$ whenever $\frac{4C_k^{(1)}}{3}<1$. For $k=3$, we have the value $C_3^{(1)} = \frac{5}{4}-\frac{2\sqrt{2}}{3} \simeq 0.3072$. Fitting our numerical data at $p\ll1$ allows us to obtain $C_4^{(1)} = 0.2599, C_5^{(1)} = 0.1968$. We also note that the numerically observed value $C_5^{(1)} \simeq 0.1968$ agrees with our analytically obtained bound $\frac{113}{60}-\frac{8\sqrt{10}}{15} \geq C_5^{(1)}$. Substituting these values gives 
\begin{align}
    p_n \sim n^{-\alpha_k} p, 
\end{align}
where $\alpha_3 = 0.8125, \alpha_4 = 0.7645, \alpha_5 = 0.8313$. Thus, concatenating the $5$-slot purification schemes would be slightly better than concatenating $3$-slot purification schemes. 

\section{Dephasing noise}\label{ap:dephasing}
Here we provide an analysis for the dephasing noise. Analogous to the depolarizing case, we have access to the noisy unitary channel 
\begin{align}
    \widetilde{\mathcal{N}}_{\mathcal{U}, p} = \widetilde{D}_p \circ \mathcal{U}, 
\end{align}
where $\widetilde{\mathcal{D}}_p(\rho) = \qty(1-\frac{p}{2}) \rho + \frac{p}{2}Z\rho Z^\dagger$ is the dephasing channel and $\mathcal{U}$ is an unknown unitary channel. The goal is to find a purification superchannel $\Xi$ that outputs a channel that best approximates $\mathcal{U}$. We use the channel fidelity $f_{\mathrm{Choi}}$ to quantify the performance of the protocol, and use $\widetilde{f}^{\mathrm{Seq}}_{n,d}(p), \widetilde{f}^{\mathrm{Par}}_{n,d}(p)$ to denote the optimal fidelity for purifying $d$-dimensional noisy channels with $n$ uses under sequential and parallel strategies, respectively. 

We first perform a numerical analysis for $d=2, n=1,2,3$. For $n=1,2$, we find that there is no non-trivial purification protocol within sequential strategies. On the other hand, we find that there exists a non-trivial purification protocol for $n\geq 3$. Our numerics suggest that the optimal fidelity is in fact achieved by a parallel protocol for the case $d=2, n=3$: $\widetilde{f}_{3,2}^{\mathrm{Seq}}(p) = \widetilde{f}_{3,2}^{\mathrm{Par}}(p)$. 
\begin{figure}[H]
    \centering
    \includegraphics[scale=0.55]{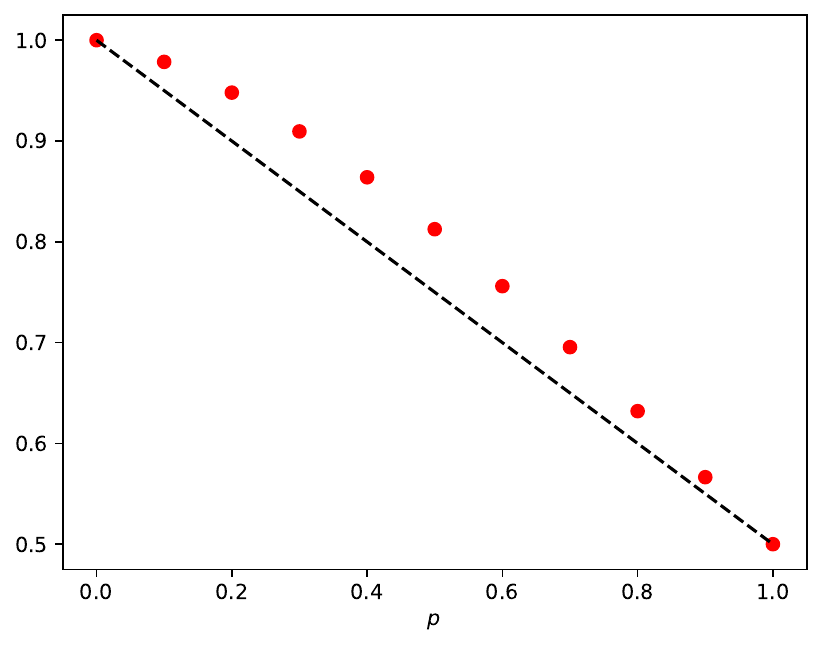}
    \caption{The red dots show the optimal fidelity 
    $\widetilde{f}_{3,2}^{\mathrm{Seq}}(p)$. Our numerics suggest that the optimal fidelity is achieved by a parallel strategy: $\widetilde{f}_{3,2}^{\mathrm{Seq}}(p) = \widetilde{f}_{3,2}^{\mathrm{Par}}(p)$. The black dashed line shows the non-purified trivial fidelity.}
\end{figure}
\noindent We observe that the optimal fidelity obtained by our numerics agrees perfectly with the following analytical form:
\begin{align}
        \qty(1-\frac{p}{2})^3+ \frac{8}{3}\qty(1-\frac{p}{2})^2\qty(\frac{p}{2}) + \frac{1}{3}\qty(1-\frac{p}{2})\qty(\frac{p}{2})^2. 
\end{align}
Indeed, this fidelity is achievable by the following Choi matrix $\mathcal{J}_\Xi$: 
\begin{align}
    &\mathcal{J}_\Xi = (V_F \otimes V_P)\mathcal{J}'_\Xi (V_F \otimes V_P)^\dagger,\\
    &\mathcal{J}'_\Xi = \sum_{i=1}^6 2|K_i\rangle \langle K_i| + \sum_{i=1}^4 |L_i\rangle \langle L_i|, 
\end{align}
where $V = \mqty[ 0 & -1\\ 1 & 0]$, and $|K_i\rangle$ and $|L_i\rangle$ are provided in the Young-Yamanouchi basis as 
\begin{align}
\begin{dcases}
    |K_1\rangle ={}
    \frac{\sqrt2}{2}
    \left|\raisebox{0.2em}{\scalebox{0.2}{\ydiagram{2,2}}},S_1^{(2,2)},T_1^{(2,2)}\right\rangle
    \otimes
    \left|\raisebox{0.2em}{\scalebox{0.2}{\ydiagram{2,2}}},S_1^{(2,2)},T_1^{(2,2)}\right\rangle+
    \frac{\sqrt2}{2}
    \left|\raisebox{0.2em}{\scalebox{0.2}{\ydiagram{2,2}}},S_1^{(2,2)},T_2^{(2,2)}\right\rangle
    \otimes
    \left|\raisebox{0.2em}{\scalebox{0.2}{\ydiagram{2,2}}},S_1^{(2,2)},T_2^{(2,2)}\right\rangle,
    \\
    |K_2\rangle 
    ={}\frac{\sqrt2}{2}
    \left|\raisebox{0.2em}{\scalebox{0.2}{\ydiagram{2,2}}},S_1^{(2,2)},T_1^{(2,2)}\right\rangle
    \otimes
    \left|\raisebox{0.2em}{\scalebox{0.2}{\ydiagram{3,1}}},S_2^{(3,1)},T_1^{(3,1)}\right\rangle+
    \frac12
    \left|\raisebox{0.2em}{\scalebox{0.2}{\ydiagram{2,2}}},S_1^{(2,2)},T_1^{(2,2)}\right\rangle
    \otimes
    \left|\raisebox{0.2em}{\scalebox{0.2}{\ydiagram{3,1}}},S_2^{(3,1)},T_2^{(3,1)}\right\rangle
    \\
    \quad \quad \quad \:-
    \frac12
    \left|\raisebox{0.2em}{\scalebox{0.2}{\ydiagram{2,2}}},S_1^{(2,2)},T_2^{(2,2)}\right\rangle
    \otimes
    \left|\raisebox{0.2em}{\scalebox{0.2}{\ydiagram{3,1}}},S_2^{(3,1)},T_3^{(3,1)}\right\rangle ,
    \\
    |K_3\rangle 
    ={}
    -\frac12
    \left|\raisebox{0.2em}{\scalebox{0.2}{\ydiagram{2,2}}},S_1^{(2,2)},T_1^{(2,2)}\right\rangle
    \otimes
    \left|\raisebox{0.2em}{\scalebox{0.2}{\ydiagram{3,1}}},S_2^{(3,1)},T_3^{(3,1)}\right\rangle+
    \frac{\sqrt2}{2}
    \left|\raisebox{0.2em}{\scalebox{0.2}{\ydiagram{2,2}}},S_1^{(2,2)},T_2^{(2,2)}\right\rangle
    \otimes
    \left|\raisebox{0.2em}{\scalebox{0.2}{\ydiagram{3,1}}},S_2^{(3,1)},T_1^{(3,1)}\right\rangle
    \\
    \quad \quad \quad \:-\frac12
    \left|\raisebox{0.2em}{\scalebox{0.2}{\ydiagram{2,2}}},S_1^{(2,2)},T_2^{(2,2)}\right\rangle
    \otimes
    \left|\raisebox{0.2em}{\scalebox{0.2}{\ydiagram{3,1}}},S_2^{(3,1)},T_2^{(3,1)}\right\rangle,
    \\
    |K_4\rangle 
    ={}
    \frac{\sqrt2}{2}
    \left|\raisebox{0.2em}{\scalebox{0.2}{\ydiagram{2,2}}},S_1^{(2,2)},T_1^{(2,2)}\right\rangle
    \otimes
    \left|\raisebox{0.2em}{\scalebox{0.2}{\ydiagram{4}}},S_3^{(4)},T_1^{(4)}\right\rangle
    -
    \frac12
    \left|\raisebox{0.2em}{\scalebox{0.2}{\ydiagram{2,2}}},S_1^{(2,2)},T_1^{(2,2)}\right\rangle
    \otimes
    \left|\raisebox{0.2em}{\scalebox{0.2}{\ydiagram{2,2}}},S_1^{(2,2)},T_1^{(2,2)}\right\rangle
    \\
    \quad \quad \quad \:+
    \frac12
    \left|\raisebox{0.2em}{\scalebox{0.2}{\ydiagram{2,2}}},S_1^{(2,2)},T_2^{(2,2)}\right\rangle
    \otimes
    \left|\raisebox{0.2em}{\scalebox{0.2}{\ydiagram{2,2}}},S_1^{(2,2)},T_2^{(2,2)}\right\rangle ,
    \\
    |K_5\rangle 
    ={}
    \frac{\sqrt3}{2}
    \left|\raisebox{0.2em}{\scalebox{0.2}{\ydiagram{2,2}}},S_1^{(2,2)},T_1^{(2,2)}\right\rangle
    \otimes
    \left|\raisebox{0.2em}{\scalebox{0.2}{\ydiagram{2,2}}},S_1^{(2,2)},T_2^{(2,2)}\right\rangle+
    \frac{\sqrt6}{6}
    \left|\raisebox{0.2em}{\scalebox{0.2}{\ydiagram{2,2}}},S_1^{(2,2)},T_2^{(2,2)}\right\rangle
    \otimes
    \left|\raisebox{0.2em}{\scalebox{0.2}{\ydiagram{4}}},S_3^{(4)},T_1^{(4)}\right\rangle
    \\
    \quad \quad \quad \:-
    \frac{\sqrt3}{6}
    \left|\raisebox{0.2em}{\scalebox{0.2}{\ydiagram{2,2}}},S_1^{(2,2)},T_2^{(2,2)}\right\rangle
    \otimes
    \left|\raisebox{0.2em}{\scalebox{0.2}{\ydiagram{2,2}}},S_1^{(2,2)},T_1^{(2,2)}\right\rangle ,
    \\
    |K_6\rangle 
    ={}
    \frac{\sqrt3}{3}
    \left|\raisebox{0.2em}{\scalebox{0.2}{\ydiagram{2,2}}},S_1^{(2,2)},T_2^{(2,2)}\right\rangle
    \otimes
    \left|\raisebox{0.2em}{\scalebox{0.2}{\ydiagram{4}}},S_3^{(4)},T_1^{(4)}\right\rangle
    +
    \frac{\sqrt6}{3}
    \left|\raisebox{0.2em}{\scalebox{0.2}{\ydiagram{2,2}}},S_1^{(2,2)},T_2^{(2,2)}\right\rangle
    \otimes
    \left|\raisebox{0.2em}{\scalebox{0.2}{\ydiagram{2,2}}},S_1^{(2,2)},T_1^{(2,2)}\right\rangle, 
\end{dcases}
\end{align}
and 
\begin{align}
\begin{dcases}
    |L_1\rangle 
    ={}
    \left|\raisebox{0.2em}{\scalebox{0.2}{\ydiagram{2,2}}},S_1^{(2,2)},T_2^{(2,2)}\right\rangle
    \otimes
    \left|\raisebox{0.2em}{\scalebox{0.2}{\ydiagram{4}}},S_1^{(4)},T_1^{(4)}\right\rangle ,
    \\[1mm]
    |L_2\rangle 
    ={}
    \left|\raisebox{0.2em}{\scalebox{0.2}{\ydiagram{2,2}}},S_1^{(2,2)},T_1^{(2,2)}\right\rangle
    \otimes
    \left|\raisebox{0.2em}{\scalebox{0.2}{\ydiagram{4}}},S_1^{(4)},T_1^{(4)}\right\rangle ,
    \\[1mm]
    |L_3\rangle 
    ={}
    \left|\raisebox{0.2em}{\scalebox{0.2}{\ydiagram{2,2}}},S_1^{(2,2)},T_2^{(2,2)}\right\rangle
    \otimes
    \left|\raisebox{0.2em}{\scalebox{0.2}{\ydiagram{4}}},S_5^{(4)},T_1^{(4)}\right\rangle ,
    \\[1mm]
    |L_4\rangle 
    ={}
    \left|\raisebox{0.2em}{\scalebox{0.2}{\ydiagram{2,2}}},S_1^{(2,2)},T_1^{(2,2)}\right\rangle
    \otimes
    \left|\raisebox{0.2em}{\scalebox{0.2}{\ydiagram{4}}},S_5^{(4)},T_1^{(4)}\right\rangle. 
\end{dcases}
\end{align}
Here, the definitions of the semistandard Young tableaux and the standard Young tableaux are: 
\begin{align}
\lambda=\scalebox{0.5}{\ydiagram{4}}: \quad 
\ytableausetup{boxsize=0.8em}
\begin{dcases}
S_1^{(4)}=
\begin{ytableau}
1 & 1 & 1 & 1
\end{ytableau},
\\[2mm]
S_2^{(4)}=
\begin{ytableau}
1 & 1 & 1 & 2
\end{ytableau}, 
\\[2mm]
S_3^{(4)}=
\begin{ytableau}
1 & 1 & 2 & 2
\end{ytableau},\quad T_1^{(4)}=
\begin{ytableau}
1 & 2 & 3 & 4
\end{ytableau}
\\[2mm]
S_4^{(4)}=
\begin{ytableau}
1 & 2 & 2 & 2
\end{ytableau},
\\[2mm]
S_5^{(4)}=
\begin{ytableau}
2 & 2 & 2 & 2
\end{ytableau}.
\end{dcases}
\end{align}

\begin{align}
\lambda=\raisebox{0.2em}{\scalebox{0.5}{\ydiagram{3,1}}}: \quad 
\begin{dcases}
S_1^{(3,1)}=
\begin{ytableau}
1 & 1 & 1 \\
2
\end{ytableau}, \quad
T_1^{(3,1)}=
\begin{ytableau}
1 & 2 & 3 \\
4
\end{ytableau}, 
\\[2mm]
S_2^{(3,1)}=
\begin{ytableau}
1 & 1 & 2 \\
2
\end{ytableau}, \quad 
T_2^{(3,1)}=
\begin{ytableau}
1 & 2 & 4 \\
3
\end{ytableau}, \quad 
\\[2mm]
S_3^{(3,1)}=
\begin{ytableau}
1 & 2 & 2 \\
2
\end{ytableau}, \quad 
T_3^{(3,1)}=
\begin{ytableau}
1 & 3 & 4 \\
2
\end{ytableau}.
\end{dcases}
\end{align}

\begin{align}
\lambda=\raisebox{0.2em}{\scalebox{0.5}{\ydiagram{2,2}}}: \quad 
\begin{dcases}
S_1^{(2,2)}=
\begin{ytableau}
1 & 1 \\
2 & 2
\end{ytableau},\quad 
&T_1^{(2,2)}=
\begin{ytableau}
1 & 2 \\
3 & 4
\end{ytableau}\\
&T_2^{(2,2)}=
\begin{ytableau}
1 & 3 \\
2 & 4
\end{ytableau}.
\end{dcases}
\end{align}
One can show that $\mathcal{J}_\Xi$ satisfies the parallel superchannel condition in~\eqref{eq:parChoi}. We note that the encoding channel is nothing but the $U(2)$-covariant entanglement-assisted QECC discussed in our lower bound proof, suggesting that the same code is a natural candidate for an optimal purification of other Pauli noise models as well. The existence of the aforementioned Choi matrix $\mathcal{J}_\Xi$ analytically proves  
\begin{align}
    \qty(1-\frac{p}{2})^3+ \frac{8}{3}\qty(1-\frac{p}{2})^2\qty(\frac{p}{2}) + \frac{1}{3}\qty(1-\frac{p}{2})\qty(\frac{p}{2})^2 \leq \widetilde{f}_{3,2}^{\mathrm{Par}}(p) \leq \widetilde{f}_{3,2}^{\mathrm{Seq}}(p). 
\end{align}
Although our numerics suggest that this lower bound is likely tight for both the parallel and sequential strategies, a rigorous analysis for the tightness is left for future work. The optimal fidelity for $n> 3$, as well as the large-$n$ scaling in the $p\ll 1$ region, also remains open for the dephasing noise. 
\end{document}